\newtheorem{theorem}{Theorem}
\newtheorem{definition}{Definition}
\newtheorem{example}{Example}[section]
\title{LEGO-Compiler: Enhancing Neural Compilation Through Translation Composability}
\author[1, 3]{Shuoming Zhang}
\author[1, 3]{Jiacheng Zhao}
\author[2]{Chunwei Xia}
\author[2]{Zheng Wang}
\author[1, 3]{Yunji Chen}
\author[1, 3, 4]{Xiaobing Feng}
\author[1, 3]{Huimin Cui\thanks{Corresponding Author} }
\affil[1]{SKLP, Institute of Computing Technology, CAS}
\affil[ ]{\textit {\{zhangshuoming21s,zhaojiacheng,cyj,fxb,cuihm\}@ict.ac.cn}}
\affil[2]{University of Leeds, UK}
\affil[ ]{\textit {\{C.Xia, Z.Wang5\}@leeds.ac.uk}}
\affil[3]{University of Chinese Academy of Sciences, Beijing, China}
\affil[4]{Zhongguancun Laboratory, Beijing, China}
\begin{document}

\maketitle

\begin{abstract}
Large language models (LLMs) have the potential to revolutionize how we design and implement compilers and code translation tools. However, existing LLMs struggle to handle long and complex programs.  
We introduce LEGO-Compiler, a novel neural compilation system that leverages LLMs to translate high-level languages into assembly code. 
Our approach centers on three key innovations: LEGO translation, which decomposes the input program into manageable blocks; 
breaking down the complex compilation process into smaller, simpler verifiable steps by organizing it as a verifiable LLM workflow by external tests;    
and a feedback mechanism for self-correction. Supported by formal proofs of translation composability, LEGO-Compiler demonstrates high accuracy on multiple datasets, including over 99\% on ExeBench and 97.9\% on industrial-grade AnsiBench. Additionally, LEGO-Compiler has also acheived near one order-of-magnitude improvement on compilable code size scalability. This work opens new avenues for applying LLMs to system-level tasks, complementing traditional compiler technologies.
\end{abstract}

\section{Introduction}
\label{intro}

The rapid development of Large Language Models (LLMs) has led to an expansion of their applications and effectiveness across various domains~\citep{rombach2022high,openai2023chatgpt, sora2024, copilot2024}. One important area where LLMs have shown impressive results is code translation, including tasks such as code generation from natural languages~\citep{zan-etal-2023-large} and transformation between programming languages~\citep{llmcodetranslation2024fse}. In code translation, LLMs have demonstrated remarkable accuracy and readability, often surpassing manually crafted translators.

While LLMs have shown promising results in translating between high-level programming languages~\citep{roziere2020unsupervised, roziere2021leveraging, CodeGen-ICLR23} and in decompilation tasks~\citep{decompile-fu2019coda, decompile-cao2022boosting, decompile-armengol2023slade}, their application to translating from high-level languages to low-level assembly languages remains a relatively unexplored area, which is traditionally dominated by handcrafted compilers. However, compilers require significant engineering effort and are tailored to specific languages/architectures. It is interesting to explore whether LLMs can be used to automate the compilation process, and if so, to what extent they can achieve this. Although earlier investigations~\cite{armengol2021learning,c2llvm-HPEC22} have shown low translation accuracy, recent work~\cite{llmcompiler2024emnlp} has shown that LLMs finetuned with compiler-generated bilingual corpora can outperform advanced LLMs in C-x86 compilation tasks and can achieve up to 91\% behavioral accuracy. However, an in-depth understanding of LLMs' capabilities in this domain is still lacking.
As compilation is typically divided into two main aspects: translation and optimization. This work focuses on exploring and answering about LLM capabilities in the translation aspect of compilation.

LLMs are pre-trained on vastly large code corpora. some are monolingual, and some may be bilingual (where LLMs can learn the translation rule between two languages). 
However, most of these LLMs do not disclose their training datasets, so their capabilities can only be assessed through empirical testing. 
We primarily find that current LLMs learn the neural compilation process from directly compiler-generated bilingual corpora, which is an intuitive way to construct a pretraining dataset and teach LLMs to compile.
However, we also found that assembly code directly generated by compilers is hard for LLMs to learn due to several challenges.  
These include the presence of semantically opaque labels, symbols or numeric values that LLMs struggle to translate accurately, and the need to handle symbol renaming for identifiers with the same name in different scopes, etc. 
Although style migration or modifications to existing compilers can be made, these approaches still rely on an existing compiler to perform the neural compilation job, which doesn't outperform existing designs.

Our work takes a different approach where we do not require bilingual corpora, as a result, we don't necessarily rely on an existing compiler. \textbf{We guide LLMs to transform high-level code into assembly code step-by-step.} 
To achieve this, we first propose an adaptive compilation-knowledge guided LLM workflow, which involves a series of steps with verifications to ensure stepwise correctness, including control flow annotation, struct annotation, renaming transformation, variable mapping transformation and final assembly generation. By splitting the complex compilation task into smaller, manageable steps, we significantly reduce the task complexity in each step by forcing LLMs to focus on certain easier step and achieve substantial improvements in overall accuracy.

More importantly, the scalability of current code translation is an important and challenging problem. Although advanced LLMs already have hundreds of thousands tokens context limit, they can not merely compile a code with 2.6k tokens in CoreMark~\citep{gal2012coremark}, which is just a 200-LOC function. 
The major challenges within this significant LLM failure are two-fold: \textbf{(1)} the complexity within each expression/statement, and \textbf{(2)} the complexity of program structures. \textbf{(1)} is continuingly improved with more advanced LLMs or with proper knowledge guidance and is not our research focus. However, \textbf{(2)} is more fundamentally challenging, as a program can be arbitrarily large and complex, and LLMs are not designed to handle such complexity, \textit{direct LLM translation is just not scalable.}

To tackle this scalability problem, we propose \textbf{LEGO translation}, which draws inspiration from the modular and composable nature of LEGO blocks to divide-and-conquer it. This method breaks down large programs into manageable, semantically-composable control blocks, analogous to LEGO pieces. These blocks are then independently translated and rebuilt to form a full translation in much larger scale. 

We combine the novel LEGO translation method with our proposed neural compilation workflow, and design \textbf{LEGO-Compiler}, a scalable, LLM-driven system that leverages the power of LLMs to perform scalable neural compilation tasks.
LEGO-Compiler can correctly compile over 99\% of the code in ExeBench~\citep{armengol2022exebench}, a large scale dataset with careful unit-testing. We can also correctly compile 97.9\% of AnsiBench~\cite{ansibench2011github}, a collection of well-known ANSI C standard benchmark suites, including CoreMark~\citep{gal2012coremark}, an industrial-grade codebase that encompasses most common programming language features in C, where we compile all of its 40 functions correctly.
Regarding scalability, we have verified that LEGO translation method can significantly scale up the capability of neural code translation performed by LLMs. By ablating LEGO-Compiler methods in AnsiBench evaluation and additional Csmith~\cite{csmith2011pldi} evaluation, a random C code generator for compiler testing, LEGO translation scales up the available code size for neural compilation by near an order of magnitude.

The main contributions of this work are as follows:

\begin{itemize}
\item We propose the novel LEGO translation method to scale up the neural compilation task. By breaking down large programs into manageable, semantically-composable control blocks, the complexity of neural compilation tasks for LLMs is significantly reduced.
\item We propose a novel verifiable step-by-step neural compilation workflow that guides LLMs to transform high-level code into low-level assembly. With breakdown steps, we characterize and evaluate the compilation process in LLM's perspective, and achieve substantial improvements in behavioral accuracy compared to end-to-end translation.
\item We provide both theoretical and empirical studies by formally defining the composability in code translation that underpins the LEGO translation method and empirically demonstrating LEGO-Compiler's effectiveness through extensive evaluations. LEGO-Compiler achieve over 99\% accuracy on ExeBench, 97.9\% accuracy on AnsiBench. Ablation study also showcases that LEGO translation boosts the scalability of neural-compilable code size by an order of magnitude. The model-independent evaluation process also serves as an important benchmark for LLMs on complex system-level tasks.
\end{itemize}
\section{Related Work}
\label{related}

\subsection{Code Translation}
Recent neural-based \textbf{Code Translation} researches can be majorly categorized to two types: learning-based transpilers~\citep{roziere2020unsupervised, roziere2021leveraging, wen2022babeltower} and pre-trained language models~\citep{feng2020codebert, wang2021codet5, lu2021codexglue, roziere2022code, openai2023gpt4, Claude_3}.
The former majorly studies the scarcity of parallel corpora~\citep{xie-etal-2023-data} and develops unsupervised learning methods to overcome it. The latter using Large Language Models' vast pretrained knowledge, can also perform code translations well without training~\citep{llmcodetranslation2024fse, interactivecodetranslation2024tse}.

As for \textbf{compilation related translations}, ~\cite{armengol2021learning, c2llvm-HPEC22} preliminarily study on C-x86 and C-LLVM IR translation with limited investigations on the methods. There are also works on the reverse decompilation process~\citep{decompile-fu2019coda, decompile-cao2022boosting, decompile-armengol2023slade} and works on code optimizations~\citep{cummins2023large, cummins2024metalargelanguagemodel}. The most related work is~\cite{llmcompiler2024emnlp}, which achieves state-of-the-art 91\% Pass@1 accuracy on the C-x86 task using a finetuned CodeLlama model, where our work surpasses. Besides, their approach relies on compiler-generated bilingual corpora, while our methods can effectively eliminate such dependency by reasoning the steps of how a compiler works.

Finally, \textbf{Modular approach} is recognized as the key insight to scale up neural code generation/translation~\cite{divideconquerSMT2015, parsel2023neurips, planningcodegeneration2024arixv, divideanconquer2024neurips}.
Our work leverages similar idea of divide-and-conquer to breakdown a large long code into manageable control block parts, then LLMs can translate these parts separately with the aid of necessary context and combine their results into a large, complete and coherent translation.

\subsection{Other Related Work}
\textbf{LLM self-repair.} Recent research has focused extensively on enhancing LLMs' self-correction capabilities. Several studies closely related to our work deserve mention. A comprehensive survey by ~\cite{pan-etal-2024-automatically} thoroughly examined methods for leveraging feedback to autonomously improve LLM outputs. 
~\cite{wang-etal-2022-compilable} first uses compiler feedback for better code generation, and ~\cite{dou2024whatswrongcodegenerated} establishes the syntax-runtime-functional bug type taxonomy and builds corresponding self-repair pipelines for code. Our work is their natural extensions to neural compilation scenario. While ~\cite{olausson2024is} investigated the limitations of self-repair mechanisms in code generation, our findings diverge significantly. Contrary to their conclusions, we discovered that self-repair serves as a highly effective solution in the neural compilation process, particularly when incorporating syntax feedback and runtime feedback.  

\textbf{In-context learning and Chain-of-Thoughts.} LLMs are able to in-context learn via inference alone by providing few shots of demonstration then predicting on new inputs~\citep{min-etal-2022-rethinking, dong2024surveyincontextlearning}. Thus customized Chain-of-Thoughts~\citep{NEURIPS2022COT, chu2024navigateenigmaticlabyrinthsurvey} can guide LLMs to perform complicated reasoning~\citep{wang-etal-2022-iteratively, song2024manyshotincontextlearninghelp}, which is the cornerstone of our work. More specifically, ~\cite{levy-etal-2024-task} reveals the degradation of LLMs' performance for long context, and validate the effectiveness on using Chain-of-Thoughts to mitigate.
We found similar results in code translation/compilation tasks. However, our proposed \textbf{LEGO translations} method can significantly mitigate such degradation as it turns a long context direct translation into multiple composable, shorter ones that LLMs can handle.

\textbf{Generation Scalability and Long Context Learning.} Except for code translation, many LLM-based methods suffer scalability problems since larger inputs are not well trained like the smaller ones, which makes general methods to extend LLMs long-context capability challenging. For example, in order to coherently generate long passages of text, ~\cite{tan-etal-2021-progressive} proposes a multi-staged keyword-first progressive method to improve it significantly, where our work shares a similar insight.
~\cite{li2024retrievalaugmentedgenerationlongcontext} introduces a self-route method to dynamically choose the usage of RAG or fully in-context, balancing the cost and performance in long-context scenario, which inspires us to use a similar dynamic approach.
\section{Methods}
\label{method}

\subsection{Problem Definition}
Before introducing our method, we first define the neural compilation problem. Neural compilation can be viewed as a specialized version of code translation problem, as defined in \autoref{def:code_translation}, with the goal of translating high-level programming language as the \textit{src} language (such as C) into low-level assembly language as the \textit{dst} language (such as x86, ARM, or RISC-V). Unlike general code translation, compilation needs to handle more low-level details, such as memory layout and calling convention, while ensuring the functional correctness of the translated result. 

\begin{figure}
\centering
\includegraphics[width=0.98\textwidth]{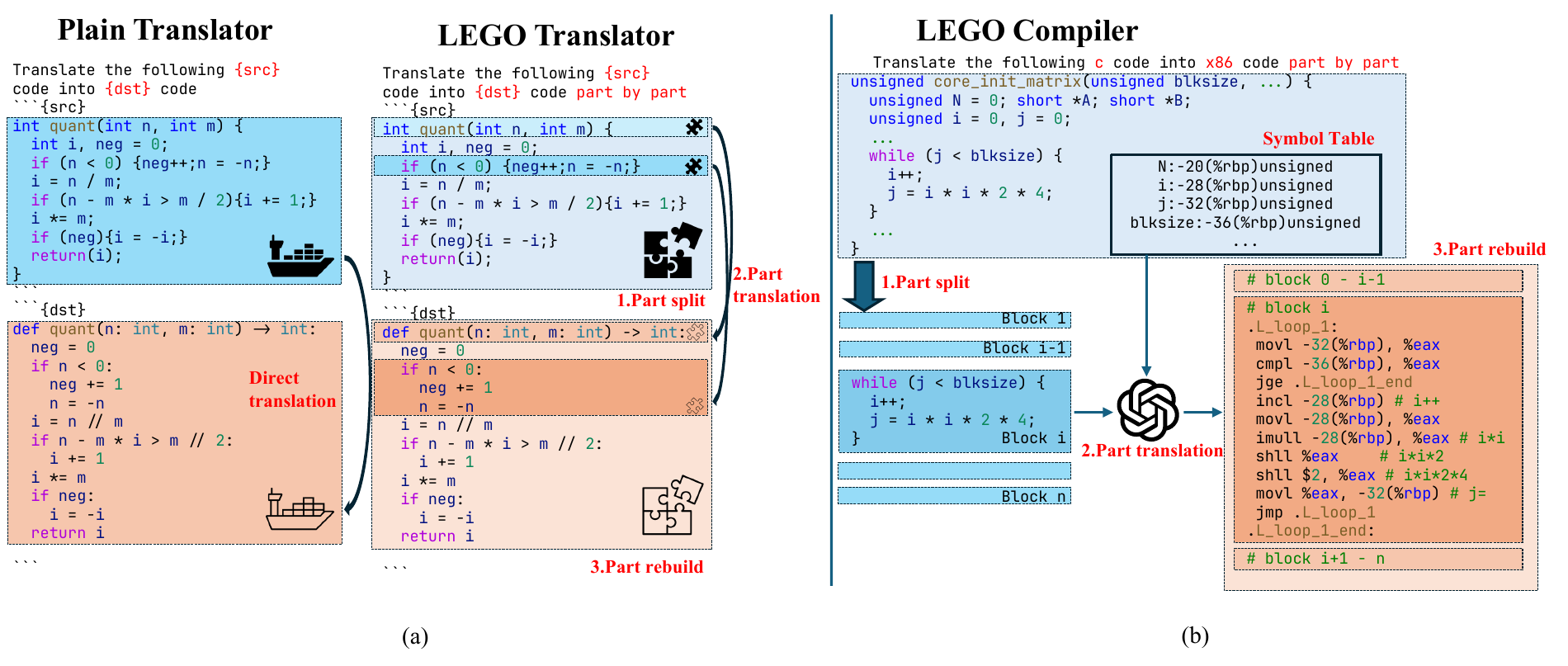}
\caption{\textbf{a.} Plain translation vs LEGO translation, by splitting the program into smaller composable control blocks(parts), translating each part becomes an easier task, and rebuilding each translated partial result will form a full translation. \textbf{b.} LEGO compiler, a special case for LEGO translation, to translate each part correctly, a symbol table needs to be maintained first and provided during translation.}
\label{insight_fig}
\end{figure}

\begin{definition}
\label{def:code_translation}
    There are two programming languages: $\mathcal{L}_{src}$ and  $\mathcal{L}_{dst}$, each is an infinite set of valid program strings. There exists a unary relation $\rightharpoonup$ from $\mathcal{L}_{src}$ to $\mathcal{L}_{dst}$. The problem is to perform a translator function $T$: $\forall x \in \mathcal{L}_{src}, (\exists u \in \mathcal{L}_{dst}, x \rightharpoonup u) \rightarrow (x \rightharpoonup T(x)), T(x) \equiv x $ semantically.
\end{definition}

\subsection{\textbf{LEGO Translation}: Core Method}

As depicted in (a) in \autoref{insight_fig}, previous neural code translation methods typically convert entire programs at the function or file level. While this approach may be effective for smaller programs, it struggles with larger programs due to significant accuracy degradation. These methods translate code at a coarse granularity, making it challenging to translate very long functions using LLMs. 
This limitation is stark: taking neural compilation as an example, even state-of-the-art LLMs~\cite{Claude_3,openai2024gpt4o}, despite possessing context windows potentially spanning hundreds of thousands of tokens (e.g., 128k-200k), demonstrably fail to correctly compile a C function exceeding just 2.6k tokens using direct translation.
They could also perform code-snippet level translation, but they lack guidelines and necessary information to compose the code-snippet level results together, and there is also no clear formal proof of the composability of code.
Despite these limitations, we observe an inherently composable nature in code. In the context of neural compilation, we propose the following insights to enhance translation scalability:
\begin{itemize}
    \item Fine-grained translation: Instead of translating an entire program at once, focus on translating smaller code snippets accurately. By ensuring each part is correctly translated, they can be combined to form a semantically equivalent complete translation.
    \item Contextual Awareness: Effective translation of smaller code snippets requires understanding their contextual positioning within the code. This includes recognizing the relationship with preceding and succeeding snippets to maintain semantic coherence.
    \item Symbol Handling: Accurate translation necessitates careful management of symbols (like variable scopes, types, and memory locations) and program constructs within each block to ensure correct mapping to the target architecture's semantics and preserve functionality.
\end{itemize}

Inspired by ~\cite{wang2024legoprover}, where this process is similar to the destruct and rebuild process of a LEGO toy, we named the fine-grained translation technique as \textbf{LEGO translation} and our system built upon it as \textbf{LEGO-Compiler}. 
As depicted in (b) in \autoref{insight_fig}, LEGO translation first breaks down large programs into manageable, self-contained blocks analogous to LEGO pieces (\textbf{Part split}). Then these blocks are independently translated (\textbf{Part translation}) and finally recombined, enabling scalable and accurate translation of complex programs (\textbf{Part rebuild}). All these methods rely on an inherent nature in programming languages, the composability in control block level, which reflects the linearization process in compiler design~\citep{wirth1996compiler}, where tree-structured control flow can be linearized, and therefore, composable. We have proved the widely applicable composability of programming languages using a constructive approach in \autoref{proof_composable}.

\subsection{A Verifiable, Stepwise Neural Compilation Workflow}
\label{verifiable_workflow}

Directly translating complex high-level code to low-level assembly poses significant challenges for LLMs. To address this, we structure the neural compilation process as a \textbf{stepwise workflow}. This approach decomposes the overall task into a sequence of distinct, more manageable sub-tasks, each designed to be handled effectively by an LLM guided by specific prompts and context.

A key principle integrated throughout this workflow is \textbf{verifiability}. As depicted in ~\autoref{fig_cot_example}, by breaking down the complex, hard-to-verify compilation process into smaller steps, we create opportunities to validate the intermediate results of many stages before proceeding. This significantly enhances the reliability of the entire process and contributes to understanding the capabilities and limitations of LLMs in compilation tasks. Verification techniques employed vary depending on certain step and may include static source code analysis (e.g., comparing ASTs), cross-referencing intermediate calculations against compilation-based frontend tools, ensuring behavioral equivalence through execution testing, and potentially applying formal methods like SMT-based checks for specific properties like memory safety.

\begin{figure}
\centering
\begin{minipage}{0.48\textwidth}
  \centering
  \includegraphics[width=\textwidth]{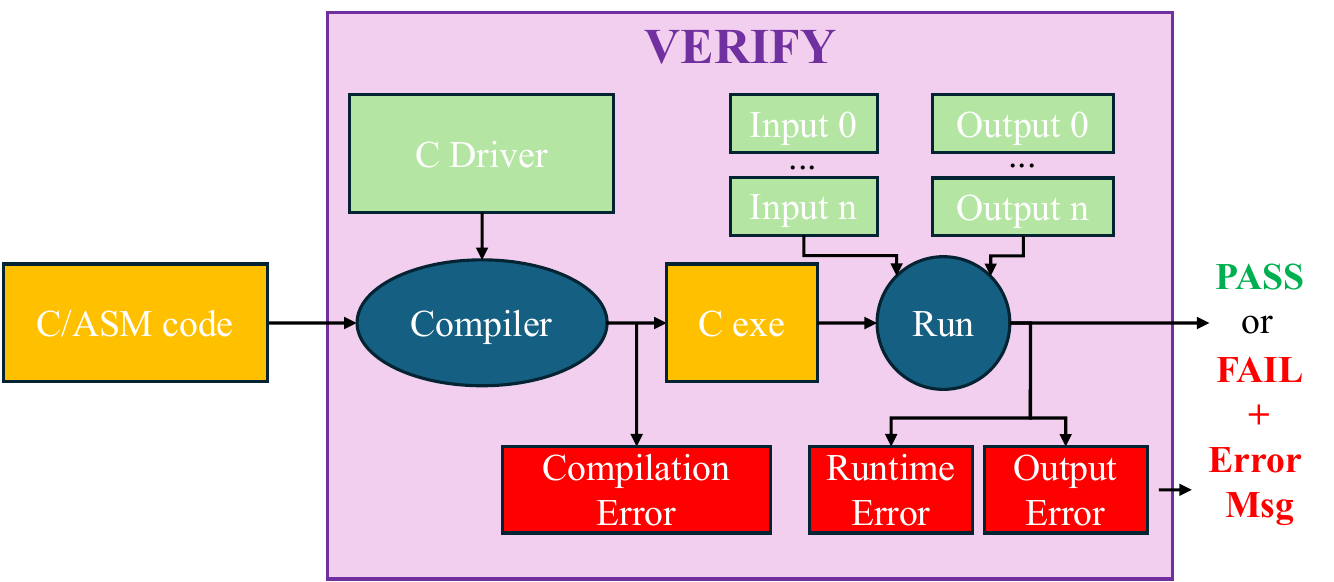}
\end{minipage}
\hspace{0.05\textwidth}
\begin{minipage}{0.42\textwidth}
  \centering
  \includegraphics[width=\textwidth]{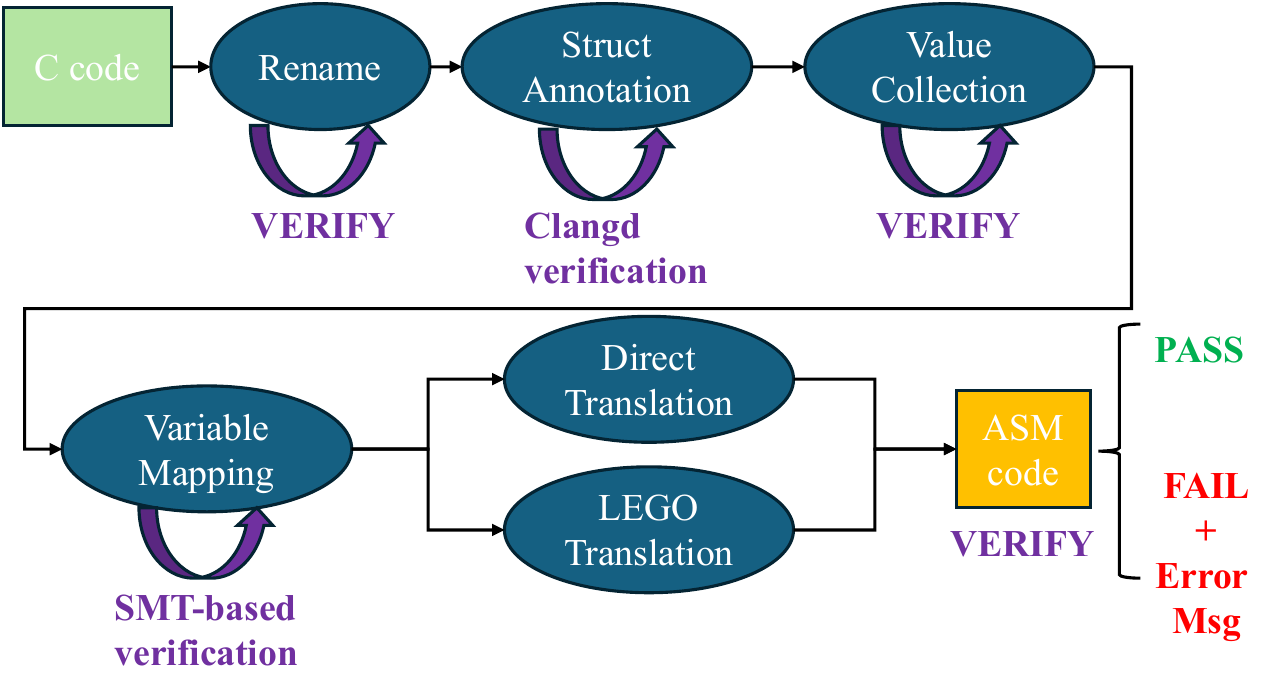}
\end{minipage}
\caption{Neural compilation workflow in \textbf{LEGO-Compiler}. Left figure shows the behavioral verification process with unit-tests. Right figure shows the detailed steps in the workflow, some step is \textit{residual} that may be skipped as performing such step may be unnecessary for certain input program.}
\label{fig_cot_example}
\end{figure}

As depicted in ~\autoref{fig_cot_example}, the workflow proceeds through the following major steps:
\begin{itemize}
  \item \textbf{Variable Renaming:} An initial source-level transformation ensures all variable identifiers within the compilation scope (e.g., a function) have unique names, resolving potential ambiguities from name shadowing and simplifying subsequent mapping. \textit{This renaming step is verifiable} by  executing the original and renamed source code on test cases to ensure behavioral equivalence. 
  \item \textbf{Type and Layout Analysis:} This stage focuses on understanding the program's data structures. The LLM performs a structured reasoning process for compound types (structs, unions, arrays) to determine their memory layout (size, alignment, member offsets) based on target architecture conventions and constituent basic types. \textit{The correctness of this analysis is verifiable} by cross-referencing the inferred type sizes and offsets against outputs from standard development tools like Clangd~\citep{clangd2024} or IntelliSense~\citep{intellisense2024}.
  \item \textbf{Variable Mapping and Allocation:} This step identifies all variable instances and determines the correspondence between high-level variables and their low-level assembly representations. Global variables are mapped to labeled memory, while local variables are assigned stack offsets relative to a base pointer. Access to compound type elements uses calculated offsets adhering to conventions like the System V ABI~\citep{SystemVABI2018}. \textit{Verification for this stage can involve checks} using techniques like SMT-based verification~\cite{z3solver2008} to detect potential memory allocation issues (e.g., overlaps, out-of-bounds accesses) based on the derived allocation plan.
  \item \textbf{Part Split (Control Flow Decomposition):} Leveraging the LEGO translation principle, this step decomposes the input function into smaller, manageable control blocks. It analyzes the program's control flow graph (CFG) and uses an LLM-driven process to perform an adaptive algorithm in ~\autoref{algo_part_split} to decide where to split, aiming for semantically coherent units suitable for independent translation. \textit{The structural integrity of the split is verifiable} by ensuring that the CFG formed by recombining the split blocks (before translation) is isomorphic to the original function's CFG, or more simply, immediate recombination.
  \item \textbf{LEGO Part Translation:} Each control block generated by the Part Split step is translated independently by the LLM into assembly code. The LLM receives the source code for the block along with relevant context derived from previous steps, such as the established variable mappings and type layouts.
  \item \textbf{Part Rebuild and Final Verification:} The translated assembly blocks are reassembled according to the original control flow structure, typically, for two adjacent code blocks, the assembling operation is just concatenation as assembly language is linearized. \textit{The functional correctness of the final, combined assembly code generated by the entire workflow (Split, Translate, Rebuild) is verified} through behavioral equivalence checks against the original source code, implemented using unit-tests.

\end{itemize}

Finally, LEGO-Compiler integrates a \textbf{self-correction loop with error feedback} as a final quality check after full translation. This mechanism detects residual errors using the assembler (semantic errors), runtime execution/debuggers (runtime errors), and behavioral testing via unit tests (behavioral errors). Diagnostic information is fed back to the LLM to iteratively refine the generated assembly. This self-correction process is crucial for enhancing the robustness and accuracy of the LLM-based compilation system. As LLMs are non-deterministic and may exhibit trivial errors, where the feedback loop is essential for correcting these errors.

\section{Experiments}
\label{experiment}

\subsection{Experimental Setup}
Major parameters we have tested are listed below:
note that not all combinations of experimental settings are tested due to resource constraints.
\begin{itemize}
    \item \textbf{Models}: We select a variety of state-of-the-art LLMs from different vendors, including OpenAI's newest GPT-4.1 and its mini version~\cite{openai2025gpt41}, Anthropic's Claude-3.5-sonnet~\cite{Claude_3} and Claude-3.7-sonnet~\cite{anthropic2025claude37}, Deepseek's Deepseek-V3 and its newest 0324 version~\cite{deepseekai2024deepseekv3technicalreport}, and Google's Gemini-2.0-flash and Gemini-2.5-pro~\cite{gemini2024arxiv}. We select models pairwisely to illustrate the model-side improvement on the neural compilation task.
    \item \textbf{Benchmarks}: We majorly test on ExeBench~\cite{armengol2022exebench}, a large-scale dataset of executable C programs, additionally we use AnsiBench~\cite{ansibench2011github} and Csmith-generated programs~\cite{csmith2011pldi} as case studies. Technically, we use ExeBench's Real-Executable subset, initially containing over 40k cases, after data cleaning and removing cases uncompilable by an oracle compiler, we finally obtain a 17,121 cases testset of ExeBench, which is too large for full scale evaluation. We further filter two subsets of ExeBench for evaluation, we filter a hard-cases subset of 1,996 samples based on the number of basic blocks and instructions within these blocks using the LLVM toolchain~\cite{LLVM2024}, shown in \autoref{fig:exebench_breakdown_grid_view}, and the other is a randomly selected subset with 2000 cases, which is used for comparison and ablation studies.
    \item \textbf{T}emperature: 0.0-1.0, with 0.2 step increments
    \item \textbf{Architecture}: \textbf{x86}\_64, \textbf{arm}-v8a, \textbf{riscv}64, majorly on \textbf{x86}
\end{itemize}

\subsection{ExeBench Evaluation}

\begin{table}[t]
  \caption{ExeBench (17,121 cases) experimental results with method-level ablation on C-x86 neural compilation task compared to previous state-of-the-art~\cite{llmcompiler2024emnlp}, the base model performs similarly, but with our proposed methods, we achieve substantial accuracy improvement on the full dataset. 
  Except DS-V3-0324 model, other models are evaluated on a subset of 2000 cases due to budget constraints, DS-V3-0324 is evaluated on both datasets and performs nearly identical (<0.2\%). 
  }
  \label{tab:results_ablation}

  \centering
  \begin{tabular}{lccc}
    \toprule
    Model & Direct & CoT workflow & LEGO translation \\
    \midrule
    DS-V3-0324-full & 91.589\% & 94.708\% & 99.375\% \\
    Zhang et.al~\cite{llmcompiler2024emnlp} & 91.718\% & - & - \\
    \midrule
    \midrule
    DS-V3-0324 & 91.60\% & 94.60\% & 99.20\% \\
    DS-V3-1226 & 88.7\% & 94.55\% & 98.30\% \\
    \midrule
    GPT-4.1 & 90.20\% & 95.10\% & 99.50\% \\
    GPT-4.1-mini & 78.90\% & 91.30\% & 97.70\% \\
    \midrule
    Claude-3.7-sonnet & 95.20\% & 97.75\% & 99.70\% \\
    Claude-3.5-sonnet & 92.40\% & 95.25\% & 99.15\% \\
    \midrule
    Gemini-2.5-pro & 94.45\% &97.25\% & 99.65\% \\
    Gemini-2.0-flash & 73.10\% & 84.60\% & 92.60\% \\
    \bottomrule
  \end{tabular}
\end{table}

\begin{table}
  \caption{ExeBench Hard Subset (1996 cases) evaluation: The applied filters are based on nums of basic blocks(10), and max(80)/all(200) instructions within these blocks analyzed by LLVM toolchain~\cite{LLVM2024}, Detailed characterization of ExeBench and its hard subset is on ~\autoref{fig:exebench_breakdown_grid_view}.}
  \label{tab:results_hard_exebench_ablation}
  \centering
  \begin{tabular}{lccc}
    \toprule
    Model & Direct & CoT workflow & LEGO translation \\
    \midrule
    GPT-4.1 & 87.43\% & 94.54\% & 98.70\% \\
    Claude-3.7-sonnet & 92.59\% & 97.14\% & 99.20\% \\
    Deepseek-V3-0324 & 85.92\% & 91.38\% & 97.60\% \\
    Gemini-2.5-pro & 94.29\% & 97.09\% & 99.10\% \\
    \bottomrule
  \end{tabular}
\end{table}

Recall ~\autoref{fig_cot_example}, we evaluate ExeBench through the following setup:
\begin{enumerate}
     \item Translate the C program to assembly (to generate hypothesis), where we have three different methods: direct translation(as baseline), stepwise workflow translation and LEGO translation. 
    Note that some steps in the workflow is necessary for the LEGO translation method, as a global context is needed.
    \item Assemble and link the hypothesis assembly to create an executable.
    \item Run the executable through 10 different IO test cases provided by ExeBench.
    \item Consider the translation \textit{successful} if it passes all test cases.
    \item If a translation fails (at any former step), apply self-fixing with the collected error feedback to the LLM, will try \textbf{k} rounds. We set \textbf{k} to 5 during the evaluation.
    \item Consider the translation \textit{failed} if it doesn't pass after all configured attempts.
\end{enumerate}

~\autoref{tab:results_ablation} and ~\autoref{tab:results_hard_exebench_ablation} summarize the empirical results of our LEGO-Compiler on ExeBench targeting x86-64 architecture. We establish carefully crafted 1-shot prompts to guide LLMs for each translation method.
With our proposed methods, all models achieve substantial improvements, where newest models' accuracy on the whole dataset/hard subset reaches averagely 99.56\% and 98.65\%.
Claude-3.7-sonnet and Gemini-2.5-pro models are the best performing models during our evaluation, achieving over 99\% accuracy on the hard subset.
We analyze the ablated results as follows:
\begin{itemize}
  \item \textbf{Step-by-step workflow} improves the translation majorly related to complex data structures and many variable assignments, where direct translation may fails to handle such complexity all together.
  \item \textbf{LEGO translation} majorly improves lengthy code translation with multiple control statements, where the model struggles to keep track of the context and the correct label usage without our divide-and-conquer methodology.
  \item \textbf{Self-correction} fixes most trivial errors related to architecure-specific knowledge, and improves at all methods as it is orthogonal. Two major types of observed errors are: 1) misuse of instruction operands, like `cmp' instructions cannot compare two immediate values or two memory values; 2) mnemonics-related, like access global variables or values stored in data section, LLMs need to generated \%rip relative addressing operands instead of direct label usage. Taking Deepseek-V3-0324 as an example, 1) and 2) account for 26 and 40 failed cases in its 172 failed cases during CoT workflow evaluation.
\end{itemize}

\begin{table}
\centering
  \caption{ExeBench's hard subset evaluations with different architectures with Gemini-2.5-pro.}
  \label{tab:results_exebench_arm_riscv}
  \begin{tabular}{lcccc}
    \toprule
    Architecture & Model & Direct & CoT workflow & LEGO translation \\
    \midrule
    x86\_64 & Gemini-2.5-pro & 94.29\% & 97.09\% & 99.10\% \\
    arm64 & Gemini-2.5-pro & 87.64\% & 92.33\% & 96.74\% \\
    riscv64 & Gemini-2.5-pro & 84.22\% & 89.88\% & 94.64\% \\
    \bottomrule
  \end{tabular}
\end{table}

Except x86 evaluation, we also evaluate LEGO-Compiler on arm64 natively on Apple M1 chip and riscv64 through Spike simulator. Due to time and budget constraints, we only evaluate the hard subset as it contains more challenging cases. As depicted in ~\autoref{tab:results_exebench_arm_riscv}, the evaluation results are similar to x86, where our LEGO-Compiler powered by Gemini-2.5-pro achieves similar improvements with our proposed methods. The globally lower accuracy may due to lower pretrained knowledge of these assembly languages in LLMs, or insufficient prompt engineering efforts as we have not thoroughly tested prompts usage on these architectures. It is also an interesting work to automate the prompt engineering process for different architectures to inject compilation-related knowledge to the translation process, which we leave for future work.

Another finding is observed through pairwise comparison of models, where we find clear improvement of newer/larger models over older/smaller models. We analyze the reasons as two-fold: First, more advanced new models are pretrained with more compilation-related knowledge, which helps the translation of certain expressions and statements. Second, newer models are more capable of reasoning, which is critical for the workflow translation and LEGO translation methods.

To sum up, the empirical results of our LEGO-Compiler system is promising, we prove a training-free approach to use LLMs as neural compilers, which can successfully translate averagely 99.56\% of ExeBench testset and 98.65\% of its hard subset across advanced LLMs from 4 state-of-the-art vendors. The model-independent evaluation process also establishes a challenging benchmark for LLMs, which requires 3 key capabilities: 1) mathematical reasoning and long-context reasoning 2) code/assembly understanding and translation, 3) error localization and correction.

\subsection{AnsiBench: more real-world codebase evaluation}

We conduct additional real-world codebases evaluation, we use AnsiBench~\citep{ansibench2011github}, a collection of well-known ANSI C standard benchmark suites~\citep{gustafson1995hint,dongarra2003linpack,gal2012coremark}, benchmarking a wide variety of systems and compilers, including a number of classic, industry-standard benchmarks as well as some selected programs that can be used as benchmarks.

We evaluate the whole AnsiBench collection with our LEGO-Compiler, powered by Claude-3.7-Sonnet, our best-performing model in previous evaluation. We list the details of every function we compiled in \autoref{fig_ansibench_1}, totally we have 96 functions in total, except for few utility functions which are easy to compile, most of them represent real-world codebase complexity. We ablate the translation methods we applied to showcase both the effectiveness of CoT-like workflow and LEGO translation.
In total, we pass 94 out of 96 cases in Ansibench across 7 different codebases, including Whetstone, Dhrystone, Hint(one failure), Linpack, Tripforce(one failure), Stream and CoreMark.When measured by token count, LEGO translation method significantly improves the translation scalability of real-world code by near an order of magnitude as illustrated in ~\autoref{fig_ansibench_1}.

\begin{figure}
  \begin{center}
  \includegraphics[width=0.5\textwidth]{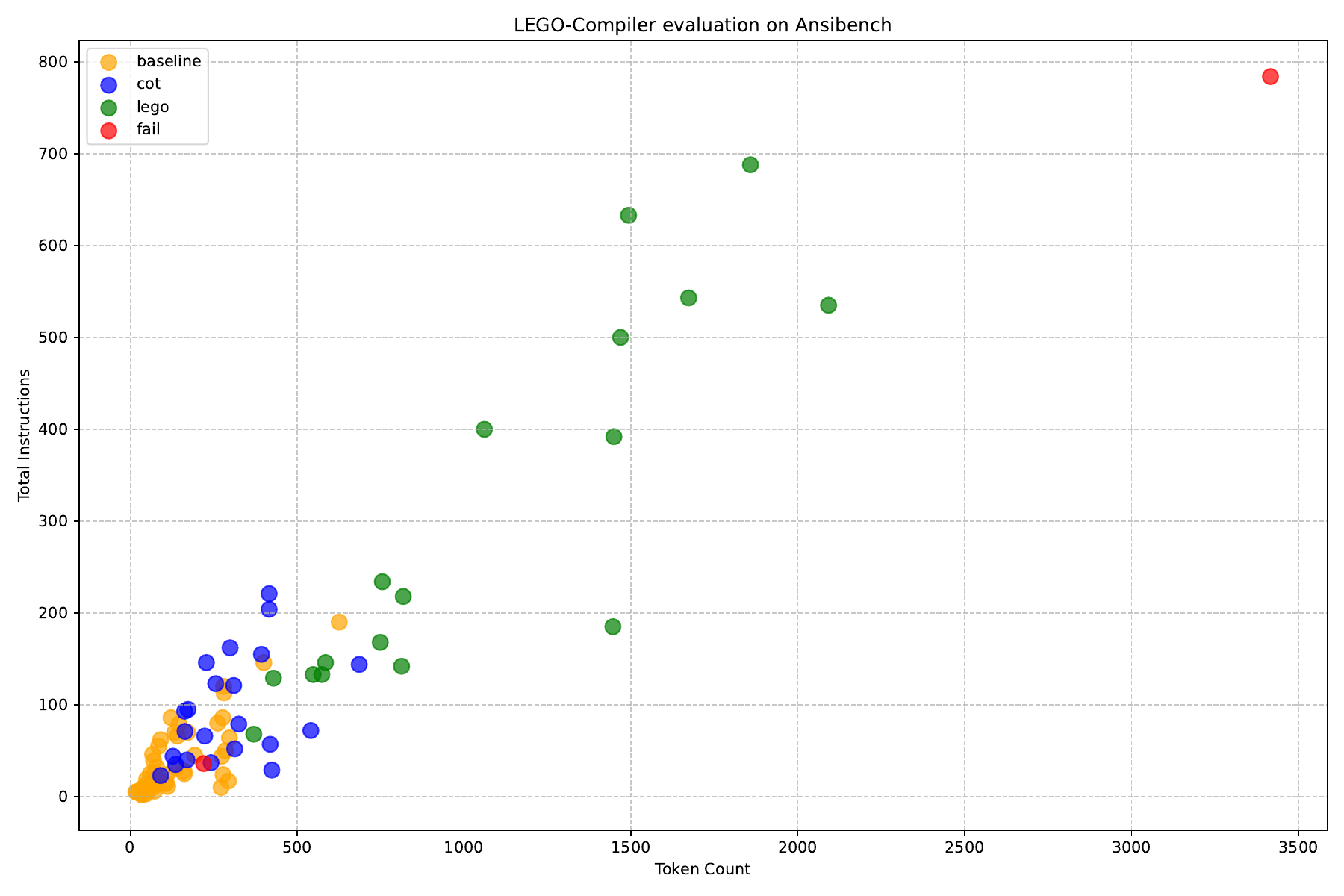}
  \end{center}
  \caption{AnsiBench evaluation results with Claude-3.7-Sonnet. The \textbf{token count} only computes the input length of C code, and typically, the output assembly will be 3-6 times larger in token size.}
  \label{fig_ansibench_1}
\end{figure}

There are majorly three types of errors where the first two types are where LEGO translation is superior. 
\begin{itemize}
    \item Lengthy code input with over a thousand token size (typically), where the output size is truncated due to limited model output length. Besides, the coarse-grained translation itself is prone to bugs. LEGO translation method can significantly reduce such errors, the case in which LEGO translation also fails is the main function of Hint benchmark, which is even more complex than the main function of CoreMark depicted in ~\autoref{coremark_main}. We analyze its failure, where the LLM-reasoning step of the stack allocation fails to generate a correct mapping. Despite this, LEGO translation handles all the other lengthy code correctly as it successfully reduces the translation complexity to control-block level.
    \item Long context forgetting problem~\cite{longcontext2024tacl}, where the model can not match the current processing assembly with the source code faithfully, LEGO translation method, on the other hand, can handle these cases efficiently with less unnecessary contexts that may cause these `random' errors. Besides, finer-grained translation also gives LLMs more attention to faithful translation of operations, the order of operations and implicit conversions. 
    \item Insufficient pretraining in LLMs, where LLMs lack certain knowledge to perform certain expression/statement translation or other architecture-specific details. For example, the other error in AnsiBench, the \texttt{generate\_password} function in TripForce, where the translation fails to translate the multiline strings correctly.
    Feedback correction can mitigate such failures. Besides, a clear model-level improvement is observed by all model pairs, and we can be positive about these failures because as LLMs advance with more pretrained knowledge, their peformance will improve as well.
\end{itemize}

\subsection{Csmith: randomly generated programs evaluation}

Except for AnsiBench evaluation. We further perform evaluations on randomly generate programs with sufficient complexity.
We use Csmith~\citep{csmith2011pldi}, a random generator of C programs which is widely used for finding compiler bugs using differential testing as the test oracle. Typically, Csmith examines compilers with random programs with corner case features and numbers, testing the robustness of compilers. Code examples generated from Csmith are illustrated in \autoref{fig_csmith_example1}.  

We follow similar ablation strategy in AnsiBench evaluation.
As depicted in \autoref{fig_csmith_results}, randomly generated programs by Csmith are very hard for both baseline and CoT-only methods to translate. 
In a randomly generated test suite of 40 cases generated by Csmith, LEGO translation successfully compiles 25 cases, while baseline translation can only compiles 4 cases, and 13 cases for CoT workflow. Besides, the complexity of cases passed by LEGO translation method are significantly larger than others, characterized by token count, basic block count and total instructions, LEGO translation scales in code size and complexity by near an order of magnitude.

During Csmith evaluation, we also identify several kinds of errors during LEGO-Compiler translation. For example, overflow value assignment is an error which doesn't occur usually but can be found in compiler testing. Taking \texttt{int16\_t x = 0x56671485;} as an example, it will trigger errors because LLMs directly generate \texttt{movw \$0x56671485, x's address} in x86, which fails to check whether the numerical value (overflows the 16 bit word) can be represented through \texttt{movw} instruction. Another example is, when handling with implicit type conversions, LLMs may not cast the type correctly, this is critical for floating point computation as operations with wrong precision will cause accumulated numerical errors. As a result, LEGO-Compiler in Csmith evaluation only achieves moderate behavioral accuracy of 62.5\%. 

However, Csmith-generated test cases do not commonly appear in real-world usages. Therefore, LEGO-Compiler is still promising in compiling common programs as evaluated by ExeBench and AnsiBench evaluation, indicating great potentials in the field of neural compilations. 
\section{Conclusion}
\label{conclusion}

We have presented LEGO-Compiler, a novel approach to neural compilation that leverages Large Language Models (LLMs) to translate high-level programming languages into assembly code. Our LEGO translation method breaks down large programs into manageable, self-contained blocks through the composable nature of code, significantly extending the scalability of neural code translation. By incorporating a series of Chain-of-Thought stages guided by classical compiler design and self-correction mechanisms, LEGO-Compiler effectively addresses key challenges in compilation tasks, achieving significant improvements in accuracy and scalability, with both theoretical and empirical studies demosntrating the effectiveness of our approach. 

These findings provide important insights into the capabilities and limitations of LLMs in neural compilation tasks. As LLM capabilities continue to improve, approaches like LEGO-Compiler are poised to play an increasingly important role in the future of software development and compilation, complementing and enhancing traditional compiler technologies.

\bibliography{main.bib}
\bibliographystyle{plain}

\newpage
\appendix

\section{Composability of C-like Language Constructs}
\label{proof_composable}
\subsection{Definitions and Language Structure}

We define a simplified C-like language structure using the following EBNF-inspired grammar:

\begin{verbatim}
block: '{' (blockItem)* '}';
blockItem: decl | stmt;
stmt:
    lVal '=' exp ';'                            # assignStmt
    | exp ';'                                   # exprStmt
    | 'goto' label ';'                          # gotoStmt
    | ';'                                       # blankStmt
    | block                                     # blockStmt
    | IF '(' exp ')' stmt (ELSE stmt)?          # ifStmt
    | WHILE '(' exp ')' stmt                    # whileStmt
    | FOR '(' stmt exp ';' stmt ')' stmt        # forStmt
    | SWITCH '(' stmt ')' stmt                  # switchStmt
    | BREAK ';'                                 # breakStmt
    | CONTINUE ';'                              # continueStmt
    | RETURN (exp)? ';'                         # returnStmt;
\end{verbatim}

We derived from the grammar that describes C-like language to form the following definitions. Also for simplicity purposes, we omit the slight differences between \textbf{decl}, \textbf{stmt} and \textbf{exp}.

\begin{definition}[Basic Statement]
\label{def:basic_statement}
A basic statement is a statement that does not contain any other statements within its structure. This includes assignStmt, exprStmt, gotoStmt, blankStmt, breakStmt, continueStmt, and returnStmt. We first exclude gotoStmt for the main proof for simplicity.
\end{definition}

\begin{definition}[Basic Block]
\label{def:basic_block}
A basic block is a sequence of consecutive basic statements as defined in \autoref{def:basic_statement}, in which flow of control enters at the beginning and leaves at the end without halt or possibility of branching except at the end.
\end{definition}

\begin{definition}[Control Block]
\label{def:control_block}
A control block is a code snippet that reflects a complete control structure, such as for(;;)\{\}, if()\{\}[else\{\}], while()\{\}, do\{\}while(), or switch()\{case:...\}. Each subpart of a control block can be other control blocks or basic blocks as defined in \autoref{def:basic_block}.
\end{definition}

\begin{definition}
\label{def:basic_control_block}
A basic control block is an innermost control block (\autoref{def:control_block}) where each of its subparts contains only basic blocks as defined in \autoref{def:basic_block}.
\end{definition}

\begin{definition}[Compound Control Block]
\label{def:compound_control_block}
A compound control block is a control block (\autoref{def:control_block}) that contains at least one subpart that is not a basic block (\autoref{def:basic_block}), but rather another control block as defined in \autoref{def:control_block}.
\end{definition}

\begin{definition}[Translation Function and Valid Translations]
\label{def:translation_function}
Let $\mathcal{T}$ be the set of all valid translation functions from $SRC$ to $DST$, where $SRC$ is the source language (our C-like language) and $DST$ is the destination language (e.g., x86 assembly). 

Formally, $\mathcal{T} = \{T \mid T: SRC \rightarrow DST\}$ such that for any $T \in \mathcal{T}$ and any $stmt \in SRC$:

1. $T(stmt) \in DST$
2. $T(stmt)$ preserves the semantics of $stmt$

A translation function $T \in \mathcal{T}$ maps each construct in the source language to one or more constructs in the destination language while preserving the program's behavior.
\end{definition}

\begin{definition}[Translation Composability]
\label{def:translation_composability}
Let $(SRC, \circ)$ be the source language with concatenation operation $\circ$, and $(DST, \cdot)$ be the destination language with concatenation operation $\cdot$. Let $\mathcal{T}$ be the set of valid translation functions as defined in \autoref{def:translation_function}. 

Translation composability holds if and only if:

\begin{equation*}
\exists T \in \mathcal{T} : \forall P_1, P_2 \in SRC, T(P_1 \circ P_2) \equiv T(P_1) \cdot T(P_2)
\end{equation*}

Where:
\begin{itemize}
    \item $T: SRC \rightarrow DST$ is a translation function
    \item $\equiv$ denotes semantic equivalence, preserving both control flow and data flow
    \item $\circ: SRC \times SRC \rightarrow SRC$ is the concatenation operation in the source language
    \item $\cdot: DST \times DST \rightarrow DST$ is the concatenation operation in the destination language
\end{itemize}
\end{definition}

\subsection{Composability of Basic Statements}

\begin{theorem}[Composability of Basic Statements]
\label{thm:basic_statement_composability}

For any two basic statements $stmt_1$ and $stmt_2$ in SRC, as defined in \autoref{def:basic_statement}, their translation is composable:
$T(stmt_1 \circ stmt_2) \equiv T(stmt_1) \cdot T(stmt_2)$
\end{theorem}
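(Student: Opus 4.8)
The plan is to prove composability of basic statements directly from the definition of how a translation function handles individual basic statements, exploiting the crucial fact that none of the basic statements (excluding \texttt{gotoStmt}, which is deferred per \autoref{def:basic_statement}) introduce any branching or nonlinear control flow. The key observation is that a basic statement, by \autoref{def:basic_statement}, contains no nested statements, so its translation $T(stmt_i)$ is a self-contained block of destination code whose control flow enters at the top and, critically, falls through to the bottom. This straight-line property is exactly what makes concatenation in $DST$ faithfully mirror concatenation in $SRC$.

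First I would establish the semantic model against which $\equiv$ is judged: a program state (the contents of memory and registers, plus the program counter / entry-exit interface), and argue that executing $T(stmt_1 \circ stmt_2)$ consists of entering $T(stmt_1)$, transforming the state exactly as $stmt_1$ does, falling through to $T(stmt_2)$, and transforming the state exactly as $stmt_2$ does. Then I would show this composed state transformation equals the effect of executing $stmt_1$ followed by $stmt_2$ on the source side, which is the semantics of $stmt_1 \circ stmt_2$. The argument reduces to two facts: (1) each $T(stmt_i)$ preserves the semantics of $stmt_i$ in isolation, guaranteed by \autoref{def:translation_function}; and (2) the entry and exit points compose, i.e. the exit state of $T(stmt_1)$ is precisely the entry state consumed by $T(stmt_2)$, with no intervening control transfer. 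I would verify (2) by a short case analysis over the basic statement kinds — \texttt{assignStmt}, \texttt{exprStmt}, \texttt{blankStmt}, \texttt{returnStmt}, \texttt{breakStmt}, \texttt{continueStmt} — checking that each compiles to a linear, fall-through code fragment so that the concatenation operator $\cdot$ places them in sequential execution order.

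The main obstacle I anticipate is making the notion of ``fall-through'' and ``entry/exit interface'' precise enough to carry the argument, rather than any deep technical difficulty. In particular, \texttt{returnStmt}, \texttt{breakStmt}, and \texttt{continueStmt} are semantically subtle: once control reaches a \texttt{return}, the subsequent statement $stmt_2$ is dead code, so I must argue that $T(stmt_1) \cdot T(stmt_2)$ is still semantically equivalent to $T(stmt_1 \circ stmt_2)$ because both simply never execute $T(stmt_2)$ — the equivalence holds vacuously for the unreachable portion. Similarly, \texttt{break} and \texttt{continue} transfer control to a surrounding construct's exit or latch, but at this basic-statement level the relevant point is only that they compile to fragments whose control-flow behavior is identical whether or not trailing code is appended. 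I would handle these by defining $\equiv$ to quantify over all entry states and to compare the resulting observable behavior (final state or the control-transfer target), so that unreachable code contributes nothing on either side.

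Given the linearity of all basic-statement translations, the proof is essentially immediate once the semantic framework is fixed, and it naturally sets up induction for the later composability results on basic blocks and control blocks. I would therefore keep the case analysis terse and foreground the structural reason — the absence of nested statements and the fall-through discipline — as the conceptual core, treating \autoref{def:translation_function} as the black box that supplies per-statement correctness and leaving the composition of interfaces as the genuine content of the theorem.
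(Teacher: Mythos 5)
Your proposal is correct and follows essentially the same route as the paper's (very terse) proof, which likewise rests on the three criteria you identify: control-flow preservation via fall-through, data-flow preservation of the intermediate state, and per-statement correctness supplied by \autoref{def:translation_function}. The one place you go beyond the paper is your handling of \texttt{returnStmt}, \texttt{breakStmt}, and \texttt{continueStmt}: the paper dismisses these as ``trivially'' composable on the grounds that they ``do not affect control or data flow,'' which is imprecise since they are control transfers, whereas your dead-code/vacuity argument --- that $T(stmt_1)\cdot T(stmt_2)$ and $T(stmt_1\circ stmt_2)$ agree because neither ever executes $T(stmt_2)$ after the transfer --- is the correct justification and worth retaining.
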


\begin{proof}
We prove this for all combinations of assignment statements and expression statements. The proof considers control flow preservation, data flow preservation, and independence of translation. Other basic statements (blank, return, etc.) trivially maintain composability as they do not affect control or data flow when composed with other basic statements.
\end{proof}

\begin{example}
\label{ex:basic_statement_composability}
This example illustrates the composability of basic statements as defined in \autoref{def:basic_statement} and proved in \autoref{thm:basic_statement_composability}.

Consider the following sequence of basic statements:

\begin{verbatim}
a = b + 3;  // stmt_1
b = a - 1;  // stmt_2
\end{verbatim}

The translation of these statements might look like:

\begin{verbatim}
T(stmt_1):
    mov eax, [b]
    add eax, 3
    mov [a], eax

T(stmt_2):
    mov eax, [a]
    sub eax, 1
    mov [b], eax
\end{verbatim}

These translations are composable because:

1. Control Flow: The order of execution is preserved (stmt\_1 then stmt\_2).
2. Data Flow: The value of 'a' computed in stmt\_1 is correctly used in stmt\_2.
3. Independence: The translation of stmt\_2 does not depend on how stmt\_1 was translated, only on its effect (the value of 'a').

Therefore, $T(stmt_1 \circ stmt_2) \equiv T(stmt_1) \cdot T(stmt_2)$, demonstrating composability.
\end{example}

\autoref{ex:basic_statement_composability} illustrates that even when statements have data dependencies, their translations remain composable as long as the order of operations is preserved. Similar proof of composability can be made for all stmts within a basic block (\autoref{def:basic_block}).

\subsection{Composability of Basic Control Structures}

\begin{theorem}[Composability of Basic Control Structures]
\label{thm:basic_control_structure_composability}
Basic control structures (if-else, for, while, do-while, switch-case), where all their components are basic blocks as defined in \autoref{def:basic_block}, are composable under the translation function $T$ as defined in \autoref{def:translation_function}.
\end{theorem}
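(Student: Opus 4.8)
The plan is to proceed by a case analysis over the five families of control structures (if-else, for, while, do-while, switch-case) and, for each, exhibit an explicit linearization schema that realizes the translation $T$. The guiding principle, borrowed from the standard linearization step in compiler construction, is that every such structure can be laid out as a sequence of translated basic blocks stitched together by conditional and unconditional jumps targeting freshly generated labels. For the if-else case, for instance, I would translate the guard expression, emit a conditional branch to an \texttt{else} label, then place $T(\text{body}_{\text{true}})$, an unconditional jump to an \texttt{end} label, the \texttt{else} label, $T(\text{body}_{\text{false}})$, and finally the \texttt{end} label; the loop cases follow the analogous back-edge layouts and switch-case reduces to a jump table or a chain of comparisons. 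Because every component is a basic block, the translations of those components are supplied directly by \autoref{thm:basic_statement_composability}, so the remaining work is to argue that the jump/label skeleton glues them together correctly.

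The first substantive step would be to establish the single-entry, single-exit (SESE) property of each linearized schema: control enters only at the topmost emitted instruction and leaves only by falling through the final \texttt{end} label, with \texttt{break}/\texttt{continue} handled by redirecting to the loop's exit or header label. This SESE property is precisely what makes a control block behave, at the outer level, exactly like a basic block with respect to concatenation — it exposes one predecessor edge and one successor edge — so that the composability equation $T(P_1 \circ P_2) \equiv T(P_1) \cdot T(P_2)$ of \autoref{def:translation_composability} reduces to the already-proved basic-block case once the control block is treated as an opaque SESE region.

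Next I would discharge the two semantic obligations of \autoref{def:translation_function} for each schema: control-flow preservation (the branch targets realize exactly the source structure's transfers, and every path from entry reaches the unique exit iff the corresponding source path terminates normally) and data-flow preservation (the guard is evaluated before the branch, and the component blocks observe the same live values as in the source, which follows from the component-level guarantee of \autoref{thm:basic_statement_composability}). To make concatenation with neighboring code sound, I would additionally require that all labels emitted inside a control block are fresh and local in scope, so that prepending or appending other code can neither capture nor shadow an internal jump target; this label-hygiene condition is exactly what rules out interference across the $\circ$ operation.

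The hardest part will be switch-case and the loop back-edges. Switch-case is delicate because C's fall-through semantics mean a single \texttt{case} label does not isolate a basic block unless a \texttt{break} is present, so I would model the case arms as a chain of SESE regions with explicit fall-through edges and show that the dispatch (jump table or comparison chain) respects exactly those edges. For loops, the back-edge creates a cycle that formally violates the straight-line "enter at top, leave at bottom" picture of a basic block; the obstacle is to show the region is nonetheless \emph{externally} SESE — the only outgoing edge is the loop-exit branch and the only incoming edge is the header — and that \texttt{break}/\texttt{continue} targets remain strictly internal, so that composability with the surrounding code is preserved despite the internal cycle.
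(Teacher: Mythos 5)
Your proposal is correct and follows essentially the same route as the paper's proof: a case analysis over the five structures, each realized by an explicit jump-and-label linearization schema whose components are the already-composable basic-block translations, followed by the control-flow and data-flow preservation arguments. The points you flag as hardest — label freshness, the externally single-entry/single-exit character of loop regions despite the back-edge, and switch fall-through — are genuine subtleties that the paper's proof passes over silently (it only works out the for-loop case in detail and asserts the rest are similar), so your treatment is, if anything, more careful on exactly the steps the paper leaves implicit.
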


\begin{proof}
We will prove this for each basic control structure:

1. For Loop:

Let $B_{init}$, $B_{cond}$, $B_{incr}$, and $B_{body}$ be the basic blocks for init, cond, incr, and body respectively.

Translation structure:
\begin{verbatim}
T(basic_for_loop):
    T(B_init)
loop_start:
    T(B_cond)
    jz loop_end
    T(B_body)
    T(B_incr)
    jmp loop_start
loop_end:
\end{verbatim}

1. Control Flow Preservation: The structure of jump instructions preserves the original control flow.
2. Data Flow Preservation: The order of operations within and between blocks is maintained.
3. Composability: $T(basic\_for\_loop) \equiv T(B_{init}) \cdot T(B_{cond}) \cdot T(B_{body}) \cdot T(B_{incr})$, where $\cdot$ represents concatenation with appropriate jump instructions.

Therefore, the basic for loop is composable under $T$. Similar proofs can be constructed for other basic control structures.
\end{proof}

2. If-Else Statement:
Let $B_{cond}$, $B_{then}$, and $B_{else}$ be the basic blocks for condition, then-branch, and else-branch respectively.

Translation structure:
\begin{verbatim}
T(basic_if_else):
    T(B_cond)
    jz else_label
    T(B_then)
    jmp end_label
else_label:
    T(B_else)
end_label:
\end{verbatim}

Control flow and data flow preservation follow similarly to the for loop case.

3. While Loop:
Let $B_{cond}$ and $B_{body}$ be the basic blocks for condition and body respectively.

Translation structure:
\begin{verbatim}
T(basic_while):
loop_start:
    T(B_cond)
    jz loop_end
    T(B_body)
    jmp loop_start
loop_end:
\end{verbatim}

4. Do-While Loop:
Let $B_{body}$ and $B_{cond}$ be the basic blocks for body and condition respectively.

Translation structure:
\begin{verbatim}
T(basic_do_while):
loop_start:
    T(B_body)
    T(B_cond)
    jnz loop_start
\end{verbatim}

5. Switch-Case Statement:
Let $B_{expr}$ be the basic block for the switch expression, and $B_1, B_2, ..., B_n$ be the basic blocks for each case.

Translation structure:
\begin{verbatim}
T(basic_switch):
    T(B_expr)
    cmp result, case1_value
    je case1_label
    cmp result, case2_value
    je case2_label
    ...
    jmp default_label
case1_label:
    T(B_1)
    // No break implies fall-through
case2_label:
    T(B_2)
    ...
default_label:
    T(B_n)
end_switch:
\end{verbatim}

For all these structures, control flow is preserved by the appropriate use of jump instructions, and data flow is maintained by the sequential execution of basic blocks. The translation of each structure is a composition of its basic block translations, proving composability.

\begin{theorem}[Composability of Break and Continue Statements]
\label{thm:break_continue_composability}
Break and continue statements, which are basic statements as per \autoref{def:basic_statement}, are composable within their respective control structures when proper loop depth tracking is maintained.
\end{theorem}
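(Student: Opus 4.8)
The plan is to reduce break/continue composability to the already-established composability of basic statements (\autoref{thm:basic_statement_composability}) and basic control structures (\autoref{thm:basic_control_structure_composability}) by making the otherwise-implicit jump targets explicit through a label environment. Unlike assignment or expression statements, whose effect on control flow is trivial fall-through, a break or continue is an unconditional jump whose destination is determined not by its neighbors but by the innermost enclosing control structure. So first I would extend the translation function of \autoref{def:translation_function} to a context-parameterized form $T_\Gamma$, where $\Gamma$ is a stack that, for each enclosing loop or switch, records the pair of labels $(\ell_{\text{break}}, \ell_{\text{cont}})$ it exposes. Under this formulation $T_\Gamma(\texttt{break}) = \texttt{jmp }\ell_{\text{break}}$ and $T_\Gamma(\texttt{continue}) = \texttt{jmp }\ell_{\text{cont}}$, where both labels are read off the top of $\Gamma$ --- precisely the ``loop depth tracking'' the statement assumes. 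These labels are exactly the $loop\_end$ and $loop\_start$ (or increment) labels produced by the control-structure translations in \autoref{thm:basic_control_structure_composability}.

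Second, I would establish the key invariance lemma: within any maximal straight-line sequence of statements nested at a fixed loop depth, the environment $\Gamma$ is constant. The concatenation operator $\circ$ of \autoref{def:translation_composability} never opens or closes a loop scope, so for any two consecutive statements $stmt_1, stmt_2$ at that depth we have $T_\Gamma(stmt_1 \circ stmt_2) = T_\Gamma(stmt_1) \cdot T_\Gamma(stmt_2)$ by the same independence argument used for basic statements, since the jump emitted for a break/continue depends only on $\Gamma$ and not on the translation of its neighbor. Composing this with \autoref{thm:basic_statement_composability} for the non-jumping statements yields composability for any mixed sequence of basic and break/continue statements at a single depth, and composing with \autoref{thm:basic_control_structure_composability} lets me push the argument through nesting by structural induction on control-block depth, with $\Gamma$ being pushed and popped exactly at loop and switch boundaries.

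The main obstacle I anticipate is twofold, and both facets concern the non-local nature of these jumps rather than any arithmetic. First, I must argue that the target labels are well-defined and uniquely resolvable: distinct nesting levels must receive distinct labels so that a break at depth $k$ cannot accidentally resolve to a loop at depth $k' \neq k$; this is the formal content of ``proper loop depth tracking'' and requires showing that the label-generation scheme of \autoref{thm:basic_control_structure_composability} assigns fresh labels per scope. Second, I must reconcile the claim with the fact that break/continue do \emph{not} fall through, so statements following them in the same block become unreachable. Here the argument is that semantic equivalence $\equiv$ is still preserved because $T_\Gamma(stmt_1 \circ stmt_2)$ and $T_\Gamma(stmt_1) \cdot T_\Gamma(stmt_2)$ emit \emph{identical} instruction sequences, including the same dead code after an unconditional jump; since the destination labels are fixed by $\Gamma$ in both renderings, control transfer into the enclosing structure is identical and data flow along the reachable path is unaffected. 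Establishing this label-resolution invariant carefully --- rather than the mechanical concatenation --- is where the real work lies.
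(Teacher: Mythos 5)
Your proposal takes essentially the same route as the paper's proof: break and continue each translate to a single unconditional jump whose target label is resolved via tracked loop-nesting context, with control flow preserved by correct label resolution and data flow trivially unaffected since no data is modified. Your version is in fact more careful than the paper's --- the paper merely posits a $loop\_depth$ counter indexing labels such as $loop\_end\_label\_depth$ and asserts correctness, whereas you make the label environment $\Gamma$ explicit, prove its invariance along straight-line sequences, and address label freshness across nesting levels and the dead-code-after-jump subtlety, all of which the paper leaves implicit.
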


\begin{proof}
Let $loop\_depth$ be a counter maintained during translation to track nested loop levels.

1. Break Statement:
Translation structure:
\begin{verbatim}
T(break):
    jmp loop_end_label_depth
\end{verbatim}
Where $loop\_end\_label\_depth$ corresponds to the end of the current loop at depth $loop\_depth$.

2. Continue Statement:
Translation structure:
\begin{verbatim}
T(continue):
    jmp loop_continue_label_depth
\end{verbatim}
Where $loop\_continue\_label\_depth$ corresponds to the continuation point of the current loop at depth $loop\_depth$.

Control flow is preserved by jumping to the appropriate label based on the current loop depth. Data flow is trivially preserved as these statements do not modify data.

The composability of these statements within their containing loops is maintained because:
a) They generate a single jump instruction that integrates with the loop's control flow.
b) The loop depth tracking ensures the jump targets the correct loop level in nested structures.
\end{proof}

\subsection{Composability of Complex Structures}

\begin{definition}[Composable Control Block]
\label{def:composable_control_block}
A composable control block is either:
\begin{itemize}
    \item A basic block as defined in \autoref{def:basic_block}, or
    \item A basic control structure as proved in \autoref{thm:basic_control_structure_composability}, or
    \item A sequence of composable control blocks, or
    \item A control structure whose all subparts are composable control blocks.
\end{itemize}
\end{definition}

\begin{figure}[t]
\small
\linespread{1.0}

\begin{algorithm}[H]
\caption{Iterative Bottom-Up Composability Proof Algorithm}
\label{alg:iterative_composability_proof}
\begin{algorithmic}[0]
\Procedure{ProveComposability}{Program $P$}
    \State $blocks \gets$ DecomposeIntoOutermostControlBlocks($P$) \Comment{Initial decomposition}
    \State $to\_process \gets$ new Deque()
    \For{each $block$ in $blocks$}
        \State $to\_process$.PushBack($block$) \Comment{Initialize processing queue}
    \EndFor
    \While{$to\_process$ is not empty}
        \State $current\_block \gets to\_process$.PopFront() \Comment{Handle first unhandled block}
        \If{IsBasicBlock($current\_block$)}
            \State $\textbf{continue}$ \Comment{Do nothing}
        \ElsIf{IsControlStructure($current\_block$)}
            \State $sub\_blocks \gets$ SplitControlStructure($current\_block$) 
            \For{each $sub\_block$ in $sub\_blocks$ in $reverse$ order}
                \State $to\_process$.PushFront($sub\_block$) \Comment{Handle sub-blocks in original order}
            \EndFor
        \Else
            \State \Return $P$ is not composable \Comment{Unrecognized structure}
        \EndIf
    \EndWhile
    \State \Return $P$ is composable
\EndProcedure

\Function{SplitControlStructure}{Block $b$}
    \If{$b$ is a For Loop}
        \State \Return SplitForLoop($b$)
    \ElsIf{$b$ is an If-Else structure}
        \State \Return SplitIfElse($b$)
    \Else
        \State \Return SplitOtherControlStructure($b$) \Comment{Extensible for other structures}
    \EndIf
\EndFunction

\Function{SplitForLoop}{ForLoop $f$} \Comment{Decompose for loop into constituent parts}
    \State \Return [ 
        $f.init$,
        $f$.ForBodyLabel,
        $f.cmp$,
        ConditionalJump($f$.ForEndLabel),\newline
        $f.body$,
        $f.incr$,
        UnconditionalJump($f$.ForBodyLabel),
        $f$.ForEndLabel
    ]
\EndFunction

\Function{SplitIfElse}{IfElse $i$} \Comment{Decompose if-else into constituent parts}
    \State \Return [ 
        $i.cmp$,
        ConditionalJump($i$.ElseLabel),
        $i.then\_body$,\newline
        UnconditionalJump($i$.EndIfLabel),
        $i$.ElseLabel,
        $i$.else\_body,
        $i$.EndIfLabel
    ]
\EndFunction

\end{algorithmic}
\end{algorithm}
\linespread{1.0}
\end{figure}

\begin{theorem}[Composability of Sequential Control Blocks]
\label{thm:sequential_composability}
A sequence of composable control blocks $CB_1, CB_2, ..., CB_n$ as defined in \autoref{def:composable_control_block} is composable under the translation function $T$.
\end{theorem}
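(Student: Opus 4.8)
The plan is to prove the statement by induction on $n$, the number of blocks in the sequence, reducing the general case to a two-block composability lemma and then invoking the base results already established. For the base case $n=1$, the claim is immediate since a single $CB_1$ is composable by hypothesis. For the inductive step I would write $CB_1 \circ \cdots \circ CB_n = (CB_1 \circ \cdots \circ CB_{n-1}) \circ CB_n$; by \autoref{def:composable_control_block} the prefix $CB_1 \circ \cdots \circ CB_{n-1}$ is itself a composable control block (being a sequence of composable control blocks), so it suffices to establish the two-block case: for any composable control blocks $A$ and $B$, $T(A \circ B) \equiv T(A) \cdot T(B)$. Combining this two-block identity with the induction hypothesis $T(CB_1 \circ \cdots \circ CB_{n-1}) \equiv T(CB_1) \cdot \cdots \cdot T(CB_{n-1})$ then yields the full distributive identity.

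The core work therefore lies in the two-block lemma, which I would verify along the three axes used throughout the preceding proofs — control flow, data flow, and independence. For control flow, the key observation is that every construct covered by \autoref{def:composable_control_block} translates to an assembly block that is entered only at its first instruction and, upon completion, exits by falling through at its last instruction (the designated end-label produced by the schemes in \autoref{thm:basic_control_structure_composability}). Hence concatenating $T(A)$ with $T(B)$ places the fall-through exit of $A$ immediately before the unique entry of $B$, faithfully realizing the source-level sequencing $A \circ B$. For data flow, the machine state (register and memory contents) left by $T(A)$ serves verbatim as the input state to $T(B)$, matching the source semantics in which $B$ executes on the state produced by $A$. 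Independence holds because the variable-mapping and layout decisions are fixed globally before translation, so $T(B)$ depends only on its entry state, not on the particular instruction sequence chosen for $A$.

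The main obstacle I anticipate is discharging the control-flow closure property cleanly: I must argue that no jump emitted inside $T(A)$ or $T(B)$ crosses the concatenation boundary in an unintended way, and that the two blocks use disjoint local labels. Jump targets arise from loop back-edges and from break/continue statements; by \autoref{thm:break_continue_composability} these target only the labels of their own enclosing loop, which lie entirely within the same block, so every internal jump stays internal. Because \autoref{def:basic_statement} explicitly excludes \texttt{goto}, there are no cross-block jumps to worry about, and label clashes can be avoided by a straightforward $\alpha$-renaming of the local labels of each $CB_i$ prior to concatenation. Making the label-disjointness invariant and the ``entered-at-top, exits-by-fall-through'' property precise — and checking that both are preserved under the recursive formation rules of \autoref{def:composable_control_block} — is the delicate part; once that closure is in hand, the two-block lemma, and hence the theorem, follow directly.
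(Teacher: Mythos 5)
Your proof is correct and follows the same basic strategy as the paper's own proof (establishing that concatenation of the translated blocks preserves control flow and data flow), but the paper's version is essentially a four-line assertion of the identity $T(CB_1 \circ \cdots \circ CB_n) \equiv T(CB_1) \cdot \ldots \cdot T(CB_n)$, whereas you supply the induction on $n$, the reduction to a two-block lemma, and the genuine technical content --- the single-entry/fall-through-exit invariant, label disjointness via $\alpha$-renaming, and the containment of break/continue jumps within their enclosing loop --- that the paper leaves implicit. In short, your argument is a strictly more rigorous elaboration of the same route, and the closure properties you flag as the delicate part are exactly what the paper's proof glosses over.
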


\begin{proof}
Let $CB_1, CB_2, ..., CB_n$ be composable control blocks.
1. By \autoref{def:composable_control_block}, each $CB_i$ is composable.
2. Translation structure: $T(CB_1 \circ CB_2 \circ ... \circ CB_n) \equiv T(CB_1) \cdot T(CB_2) \cdot ... \cdot T(CB_n)$
   where $\circ$ denotes sequential composition in SRC and $\cdot$ denotes concatenation in DST.
3. Control Flow Preservation: The sequential order of control blocks is maintained in the translation.
4. Data Flow Preservation: The order of operations between control blocks is preserved.

Therefore, the sequence of composable control blocks is itself a composable control block under $T$.
\end{proof}

\begin{theorem}[Composability of Arbitrary Programs]
\label{thm:arbitrary_program_composability}
Any program $P$ that can be decomposed into a sequence of control blocks as defined in \autoref{def:control_block} is composable under the translation function $T$ if the Iterative Composability Proof algorithm (\autoref{alg:iterative_composability_proof}) marks it as composable.
\end{theorem}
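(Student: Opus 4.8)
The plan is to show that the algorithm's affirmative verdict acts as a \emph{certificate} that $P$ conforms to the recursive \autoref{def:composable_control_block}, after which composability follows by assembling the already-proved building blocks. Concretely, I would argue that an accepting run of \textsc{ProveComposability} traces exactly a structural traversal of the parse tree of $P$: the initial call decomposes $P$ into its outermost control blocks, and each subsequent \textsc{SplitControlStructure} step replaces a control structure by its grammatical constituents (its sub-blocks together with the label/jump scaffolding emitted by \textsc{SplitForLoop}, \textsc{SplitIfElse}, and the extensible cases). The algorithm returns ``composable'' precisely when every element it ever dequeues is classified as either a basic block or a recognized control structure, never reaching the \textbf{else} branch. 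Thus the verdict certifies that every node of $P$'s decomposition is one of the forms enumerated in \autoref{def:composable_control_block}.

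To turn this certificate into a proof, I would induct on a well-founded measure, the most natural being the total AST size (equivalently, the maximal control-structure nesting depth) of the block being processed. First I would establish termination of the accepting run: each split of a control structure yields constituents of strictly smaller size, and $P$ is finite, so the deque empties after finitely many steps. The base cases are (i) basic blocks, composable by \autoref{thm:basic_statement_composability} and the remark extending it to entire basic blocks, and (ii) the scaffolding items (individual labels and single conditional/unconditional jumps) produced by the split functions, which are trivially composable since each is a single marker or instruction whose placement is fixed by the control-flow template and which carries no internal data flow, mirroring the treatment in \autoref{thm:break_continue_composability}.

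The inductive step is where the real work lies. Given a control structure whose constituent sub-blocks are, by the induction hypothesis, composable control blocks, I must show the whole structure is composable. The earlier \autoref{thm:basic_control_structure_composability} proves this only when the components are \emph{basic} blocks, so the key missing ingredient is a \textbf{substitution lemma}: replacing a basic-block component $B$ by an arbitrary composable control block $CB$ inside the translation template (e.g. the \texttt{loop\_start}/\texttt{loop\_end} skeleton for loops, or the \texttt{else\_label}/\texttt{end\_label} skeleton for conditionals) preserves both control-flow and data-flow equivalence. I would prove this by observing that every jump in those templates targets a boundary label lying \emph{outside} the component region, so the internal expansion of $T(CB)$ is opaque to the surrounding scaffolding: control enters $T(CB)$ at its top and leaves at its bottom exactly as for $T(B)$, while $CB$'s own composability guarantees that its internal jumps resolve within itself. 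Combining this substitution lemma with \autoref{thm:basic_control_structure_composability} upgrades every recognized control structure with composable components into a composable control block.

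Finally, since the initial decomposition step presents $P$ as a sequence of outermost control blocks, each shown composable by the induction above, \autoref{thm:sequential_composability} assembles them into a single composable translation, yielding $T(P) \equiv T(CB_1) \cdot T(CB_2) \cdots T(CB_n)$ and establishing composability of $P$. The principal obstacle I anticipate is making the substitution lemma fully rigorous, in particular formalizing the claim that a template's jump targets are invariant under substitution of a composable block; this requires a careful argument that label scoping and the loop-depth bookkeeping used in \autoref{thm:break_continue_composability} remain consistent across nesting levels, and it is precisely the \texttt{goto} case excluded in \autoref{def:basic_statement} that keeps this bookkeeping well-defined.
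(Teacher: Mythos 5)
Your proposal follows the same bottom-up inductive skeleton as the paper's proof: ground the argument in \autoref{thm:basic_statement_composability} and \autoref{thm:basic_control_structure_composability}, close sequences with \autoref{thm:sequential_composability}, and read an accepting run of \autoref{alg:iterative_composability_proof} as a witness that every node of the decomposition falls under \autoref{def:composable_control_block}. Where you genuinely go beyond the paper is in the inductive step. The paper's proof simply asserts that ``control structures with all composable subparts are marked composable,'' even though \autoref{thm:basic_control_structure_composability} is only proved for components that are \emph{basic} blocks; your substitution lemma --- that the loop/branch templates treat a component region opaquely, so a basic-block slot may be filled by any composable control block whose control enters at the top and exits at the bottom, with internal jumps resolving internally --- is exactly the missing bridge, and your observation that its rigor hinges on label scoping, the loop-depth bookkeeping of \autoref{thm:break_continue_composability}, and the exclusion of \texttt{goto} correctly locates the delicate point. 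You also supply termination via a decreasing AST measure and treat the jump/label scaffolding emitted by the split functions as additional base cases, neither of which the paper addresses. In short, your argument is a strictly more rigorous rendering of the paper's intended proof: nothing in it conflicts with the paper, and the substitution lemma you isolate is precisely the step the paper leaves implicit.
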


\begin{proof}
The proof follows from the correctness of the Iterative Composability Proof algorithm:
\begin{enumerate}
    \item The algorithm starts with basic blocks and basic control structures, which are proven composable by \autoref{thm:basic_statement_composability} and \autoref{thm:basic_control_structure_composability}.
    \item It iteratively builds up composability for larger structures:
        \begin{itemize}
            \item Sequences of composable blocks are proved composable by \autoref{thm:sequential_composability}.
            \item Control structures with all composable subparts are marked composable.
        \end{itemize}
    \item The process continues until the entire program is marked composable or no further progress can be made.
    \item If the entire program is marked composable, it means that $T(P)$ can be expressed as a composition of the translations of its composable parts, preserving both control flow and data flow as per \autoref{def:translation_composability}.
\end{enumerate}

Therefore, if the algorithm returns that $P$ is composable, then $P$ is indeed composable under the translation function $T$.
\end{proof}

\begin{theorem}[Composability of Goto Statements]
Goto statements, which are basic statements as per \autoref{def:basic_statement}, are composable under the translation function $T$, but aribitrary goto statements can break the structured control flow assumed in the main proof.
\end{theorem}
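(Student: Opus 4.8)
The plan is to split the statement into its two halves and prove them with different techniques. For the positive half — that an individual goto composes under some valid $T$ — I would mirror the argument used for \autoref{thm:break_continue_composability}, since a goto is structurally just an unconditional branch. Concretely, I would fix $T(\texttt{goto }L) = \texttt{jmp }L_{asm}$, a single jump instruction whose target $L_{asm}$ is the assembly label assigned to the source label $L$. I would then verify the three composability conditions exactly as in the basic-statement and break/continue cases: control flow is preserved because the emitted jump transfers control to $L_{asm}$ independently of the statement's position in any concatenation $T(P_1)\cdot T(\texttt{goto }L)\cdot T(P_2)$; data flow is preserved trivially because goto mutates no state; and independence holds because the translation depends only on the identifier $L$, not on how the neighbouring statements were translated. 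This establishes $T(stmt_1 \circ \texttt{goto }L) \equiv T(stmt_1)\cdot T(\texttt{goto }L)$ in the sense of \autoref{def:translation_composability}.

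The one genuine prerequisite I would isolate and state explicitly is \emph{label consistency}: unlike break/continue, whose target is fixed implicitly by the current loop depth and therefore always lies inside the enclosing control block, a goto names a target that may live anywhere in the translation scope. So the single-jump argument only goes through when the source-to-assembly label map is a single, globally consistent function over the whole scope — each label defined exactly once and resolved identically wherever it is reached. I would carry this as a hypothesis of the positive half, since it is precisely what keeps $T(\texttt{goto }L)$ well-defined under concatenation.

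For the negative half — that arbitrary goto breaks the structured assumption of the main proof — I would argue structurally rather than by translating a single statement. The hypothesis of \autoref{thm:arbitrary_program_composability} is that $P$ decomposes into nested, self-contained control blocks that \autoref{alg:iterative_composability_proof} can peel apart bottom-up; each block is assumed to have a single entry and structured exits, which is exactly what lets it be translated independently and recombined. I would exhibit a minimal counterexample in which a goto jumps from outside a loop directly into the middle of its body (or equivalently out of a branch to a non-exit point), and show that the resulting control-flow graph is irreducible: the targeted block acquires a second, non-canonical entry point, so it is no longer self-contained, and the SplitControlStructure step of \autoref{alg:iterative_composability_proof} cannot return a decomposition consistent with \autoref{def:composable_control_block} that preserves semantics. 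This ties the obstruction to the classical reducible-versus-irreducible flow-graph dichotomy of structured programming.

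I expect the main obstacle to be making the negative half precise — pinning down exactly which goto patterns preserve composability and which destroy it. The clean dividing line I would aim for is reducibility: a goto whose target stays within the same control-block nesting (a forward or backward jump that does not pierce a block boundary) can be absorbed as an additional structured exit label and leaves the decomposition intact, whereas a goto that crosses block boundaries to create a second entry into a block yields an irreducible CFG that violates the precondition of \autoref{thm:arbitrary_program_composability}. Proving that the first case is always composable while the second provably is not requires a careful characterization of the control blocks of \autoref{def:composable_control_block} as single-entry regions, and that characterization — rather than any individual jump translation — is where the real work lies.
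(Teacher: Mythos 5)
Your positive half coincides with the paper's own proof: the paper fixes $T(goto\ l)$ to a single \texttt{jmp label\_l} (and $T(l{:})$ to the label \texttt{label\_l:}), then notes that control flow is preserved by the unconditional jump, that data flow is trivially unaffected, and concludes $T(stmt_1 \circ goto\ l \circ stmt_2) \equiv T(stmt_1) \cdot T(goto\ l) \cdot T(stmt_2)$. Your explicit \emph{label consistency} hypothesis is a useful sharpening of something the paper leaves implicit in its choice of a fixed label-naming scheme, but it does not change the route.

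Where you genuinely diverge is the negative half. The paper does not prove that arbitrary gotos break the structured assumption; it merely lists three informal complications (non-local control flow breaking nesting, difficulty of decomposition, unstructured code) inside the proof environment and moves on. You instead propose a concrete counterexample --- a goto piercing a block boundary to create a second entry point --- and a precise dividing line via CFG reducibility, arguing that an irreducible flow graph violates the single-entry precondition that \autoref{def:composable_control_block} and \autoref{alg:iterative_composability_proof} rely on. This is strictly more than the paper delivers and is, in my view, the right way to make the second clause of the theorem a provable statement rather than a remark; the cost is that you would first need to establish the single-entry characterization of composable control blocks, which the paper never states formally. Your own closing paragraph correctly identifies this as the real work. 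If you carry it out, you get a cleaner theorem (composability of goto relative to a reducibility condition) than the paper's; if you stop where the paper stops, the two proofs are the same.
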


\begin{proof}
Let $l$ be a label and $goto\ l$ be a goto statement.

Translation structure:
\begin{verbatim}
T(goto l):
    jmp label_l

T(l:):
label_l:
\end{verbatim}

The goto statement translates to an unconditional jump, preserving control flow. It doesn't directly affect data flow. Composability holds as $T(stmt_1 \circ goto\ l \circ stmt_2) \equiv T(stmt_1) \cdot T(goto\ l) \cdot T(stmt_2)$.

However, goto introduces complications:
\begin{itemize}
    \item Non-local control flow can break the nested structure of control blocks.
    \item Programs with unrestricted goto usage are difficult to decompose into well-defined control blocks.
    \item It can lead to unstructured code, complicating reasoning about program behavior.
\end{itemize}
\end{proof}

While goto is provably composable, it's discouraged in modern programming for readability, maintainability, and optimization reasons. Our composability principle is most applicable and valuable in the context of structured programming paradigms.

\subsection{Scope and Limitations of the Proof}

The proof of composability presented in this paper is based on a simplified model of C-like languages and unoptimized translation. It's important to note several key points about the scope and limitations of this proof:

\begin{enumerate}
    \item \textbf{Simplification and Correctness:} The simplifications made in our language model and translation process do not compromise the validity of the proof. The core of our argument relies on the decomposition of programs into control blocks and the composability of these blocks. The internal structure of basic blocks, while important for actual compilation, does not affect the composability principle we've established.

    \item \textbf{Unoptimized Translation:} Our proof assumes a straightforward, unoptimized translation process. This assumption is crucial for maintaining the direct correspondence between source code structures and their translations.

    \item \textbf{Limitations for Complex Language Features:} The composability principle as proved here can be applied to C-like languages, but may not hold for more complex language features. For example:
    \begin{itemize}
        \item Exception Handling: Languages with sophisticated exception handling mechanisms, such as Python, introduce complexities that can break composability. These mechanisms often require:
        \begin{itemize}
            \item Guarded execution of code blocks.
            \item Runtime type information (RTTI) for determining appropriate exception handlers.
            \item Non-local control flow that can't be easily decomposed into our model of control blocks.
        \end{itemize}
        
        \item Coroutines and Generators: Features that allow for suspending and resuming execution mid-function can introduce state that is not easily captured in our model of control flow.
        
        \item Reflection and Metaprogramming: Languages that allow for runtime modification of program structure or behavior can invalidate static composability assumptions.
    \end{itemize}

    Although not applicable to some specific language features, it doesn't mean the composablity and its derived LEGO translation method is not applicable to the whole programming language, as long as these features are not used in the code, the composability will still stand and the LEGO translation will still work.

    \item \textbf{Optimizations Across Basic Blocks:} Our proof assumes that the boundaries of control blocks are respected in the translation process. However, many real-world compiler optimizations operate across these boundaries. Examples include:
    \begin{itemize}
        \item Loop unrolling
        \item Function inlining
        \item Global value numbering
        \item Code motion optimizations
    \end{itemize}
    Such optimizations can reorder, eliminate, or combine operations from different control blocks, potentially breaking the composability property as we've defined it. However, some local optimizations, like mem2reg and strength reduction can still be performed, and is observed through our evaluation where the model tends to perform such optimizations.

    \item \textbf{Applicability:} Despite these limitations, the composability principle proved here is valuable for:
    \begin{itemize}
        \item The foundation of LEGO translation method, the proof reveals the composable nature of code in at least control block level, which is a major difference than natural languages.
        \item The proof process also guided \autoref{algo_part_split} in LEGO translation, as proving the composability and making use of the composability share similar algorithms.
    \end{itemize}
\end{enumerate}

In conclusion, while our proof provides a strong foundation for understanding composability in C-like languages with straightforward translation, it's important to recognize its boundaries. More complex language features may require extensions or modifications to this framework to maintain composability guarantees. And optimized code translation usually is not composable.

\section{Discussions}
\label{discussion}

\subsection{Universality of LEGO translation}

The LEGO translation method, while initially developed for compilation tasks, demonstrates broader applicability based on fundamental properties of programming languages rather than being specific to compilation. The composability that LEGO translation leverages stems from the well-encapsulated control flow and locality principles inherent in modern programming languages (disregarding constructs like \textbf{goto} in C, more limitations are clearly described in \autoref{limitation}).

These characteristics are intrinsic to programming languages themselves and have guided modern compiler design. They enable the modular partitioning of large-scale programs in modern software development, allowing for incremental and even parallel compilation of code. We harness these properties and apply them to the context of neural compilation using Large Language Models (LLMs).

It's important to note that the applicability of LEGO translation extends beyond compilation. It is suitable for various tasks originating from programming languages, such as code translation between different languages. This method significantly enhances the scalability of machine translation tasks for code, providing a powerful tool for handling large and complex codebases.

\subsection{Managing Highly Complex Expressions}

One of the primary challenges in neural compilation arises when dealing with expressions or statements of high complexity. In such cases, LLMs struggle to accurately evaluate these expressions through next token prediction. To address this, we propose two solutions:
\begin{itemize}
    \item External Tool Integration: We can utilize external parsing tools to generate tree structure information for complex expressions evaluation. This tree structure is then provided to the LLM, offering an explicit traversal order and guiding the evaluation process.
    \item Expression Decomposition: Without relying on external tools, we can design a new pass where the LLM identifies high-complexity expressions and rewrites them as a combination of lower-complexity expressions. This approach ensures that the entire program consists only of expressions within a proper LLM's evaluation capabilities.
\end{itemize}

\subsection{Computational Cost, Effectiveness, and Future Prospects}

While our neural compilation method is primarily a proof of concept, it does incur significantly higher computational costs compared to traditional compilation methods - approximately $10^6$ to $10^7$ times higher. However, this should be weighed against the substantial human resources required for traditional compiler development.

The key advantage of our approach lies in its potential for rapid adaptation to new instruction set extensions or frontend intrinsics. Through techniques like RAG (Retrieval-Augmented Generation) and in-context learning, our method can be extended to support new architectures or language features. This positions neural compilation as a valuable assistant in the compiler development process. A particularly promising application is in generating end-to-end unit tests for compiler adaptation to new instructions. This could significantly streamline the development and testing phases of compiler updates.
Recent research like ~\cite{munley2024llm4vv} has shown the ability to use LLMs to generate unit tests during compiler validations. 
\section{Evaluation Details}
This section provides more details figures, tables and further explanations about \textbf{LEGO-Compiler} design and experiment evaluation.

\subsection{LEGO-Compiler: detailed designs}
\label{lego_compiler_more}

{
  \begin{algorithm}[t!]
    \caption{LLM-driven \textbf{Part Split} Algorithm based on Control Blocks}
    \label{algo_part_split}
    \begin{algorithmic}
\Procedure{SelectControlBlocks}{$function$}
\State $blocks \gets \emptyset$
\State $deque.push\_back ({function})$
\While{$deque$ is not empty}
\State $block \gets deque.pop\_front()$
\State $decision \gets$ LLMDecideSplit($block$)
\If{$decision$ is "keep"}
\State $blocks$.append($block$)
\Else
\State $subBlocks \gets$ SplitByOutermostControl($block$)
\For{$subBlock$ in $subBlocks$ in $reverse$ order}
\State $deque.push\_front$($subBlock$)
\EndFor
\EndIf
\EndWhile
\State \Return $blocks$
\EndProcedure
    \end{algorithmic}
  \end{algorithm}
}

\begin{figure}
\begin{center}
\includegraphics[width=0.98\textwidth]{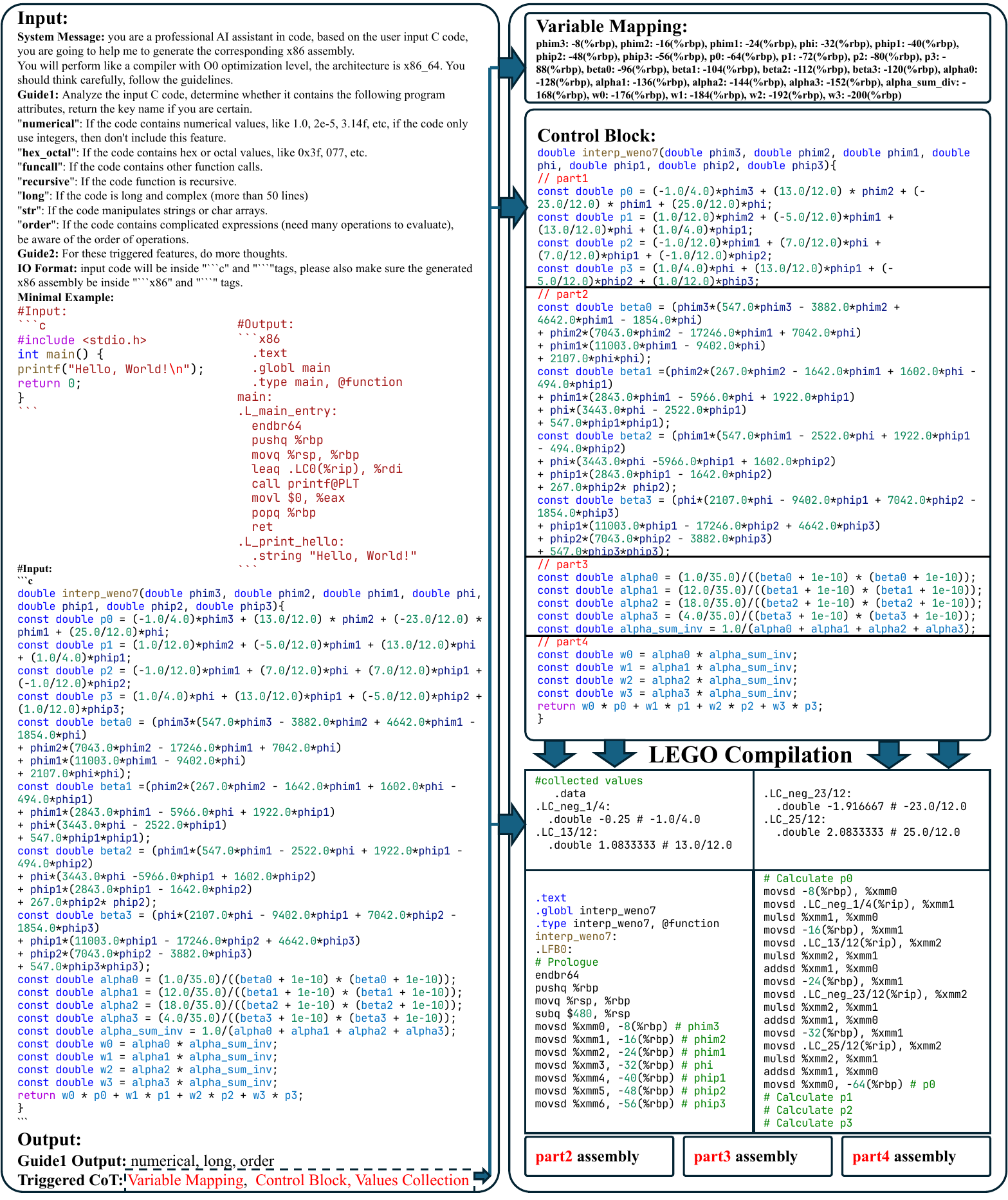}
\end{center}
\caption{Example workflow for LEGO-Compiler on a full ExeBench example: source code analysis triggers thoughts, including \textbf{variable mapping}, splitting \textbf{control blocks} and \textbf{value collection} illustrated.}
\label{fig_overview}
\end{figure}

As depicted in \autoref{fig_overview}, LEGO-Compiler is designed to perform a series of steps guided by compiler expert knowledge, just like Chain-of-Thoughts(CoTs). However, not all CoTs are necessary for each input code, so in our design, we have an \textbf{analyze-then-think} approach.
First, we will perform an analyzing pass to scan the whole program, whose output flags would trigger necessary Chain-of-Thoughts that will be used in the following process. In this example, the code pattern is majorly about double-precision floating point calculations (\textbf{numerical}) and complicated expression evaluation (\textbf{order}), besides, the code is too long for direct translation method to handle (\textbf{long}). Thus, based on the analysis, we applied the following CoTs: 
\begin{itemize}
    \item \textbf{Values collection}: A necessary thought, collecting all variables, numericals in a scanning pass, the \textbf{numerical} flag will teach the LLM about assembly knowledge to save numerical values.
    \item \textbf{Variable mapping}: Another necessary thought, which will base on the scanned variables and their types, and form a variable mapping table (SymbolTable) for later compilation.
    \item \textbf{Control Block}: the LEGO translation methodology is applied triggered by \textbf{long}, where the entire code is considered too long and will be split into control-block level code snippets via \autoref{algo_part_split}, it's noteworthy that the \textbf{order} flag from analysis will suggest the LLM to split the program into finer-grained blocks so that they can focus more on the order of operations within each block, in \autoref{fig_overview}, there is just one basic block, the flag suggests LLM to split into 4 sequential parts. Then these parts are translated with the aid of SymbolTable individually. Finally, these compiled results are composed together to form a full LEGO compilation.
\end{itemize}  

Therefore, we apply a \textit{residual} step-by-step workflow. With different input code, the triggered CoTs will be different, and non-triggered CoTs are just skipped with a residual connection.  
\subsection{ExeBench breakdown}
\label{breakdown_exebench}

\begin{figure}[ht]
    \centering
    \begin{subfigure}{0.32\textwidth}
        \centering
        \includegraphics[width=\textwidth]{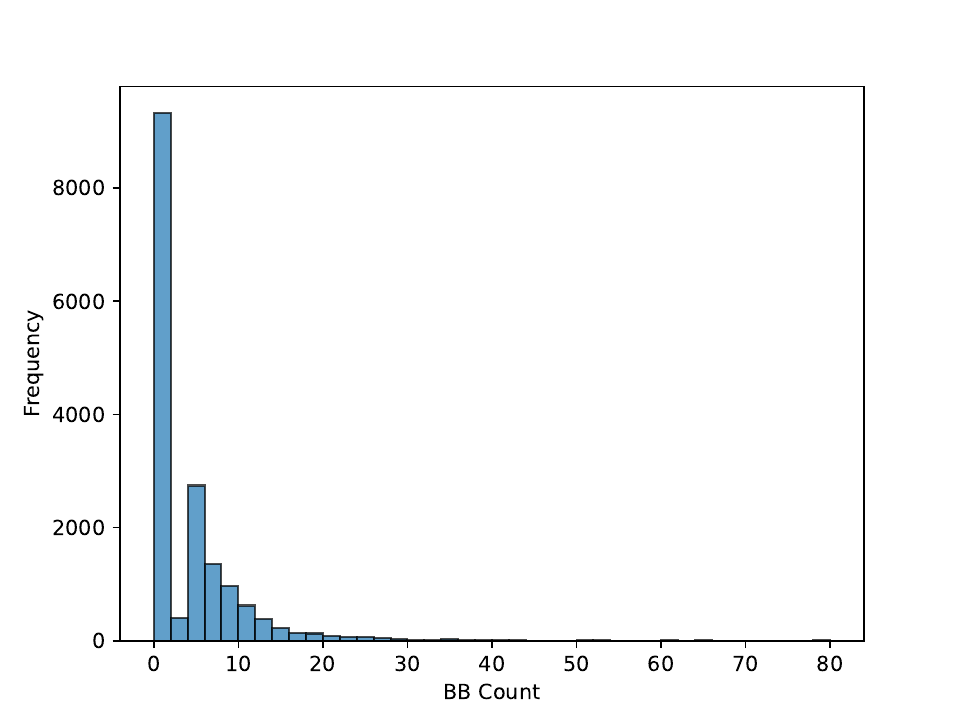}
        \label{fig:sub1}
    \end{subfigure}
    \hfill
    \begin{subfigure}{0.32\textwidth}
        \centering
        \includegraphics[width=\textwidth]{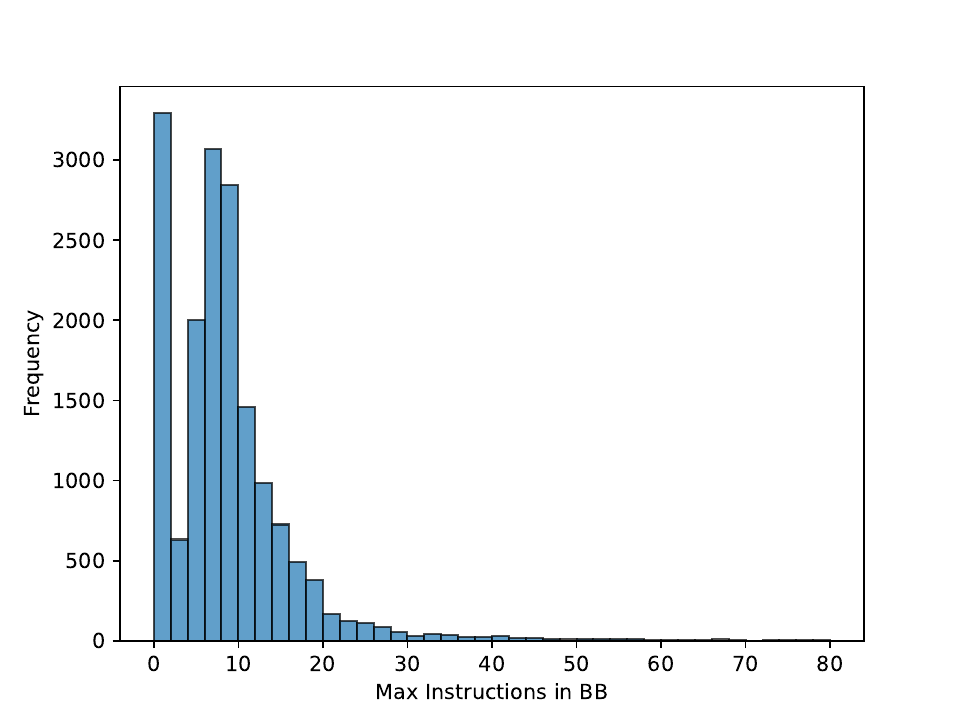}
        \label{fig:sub2}
    \end{subfigure}
    \hfill
    \begin{subfigure}{0.32\textwidth}
        \centering
        \includegraphics[width=\textwidth]{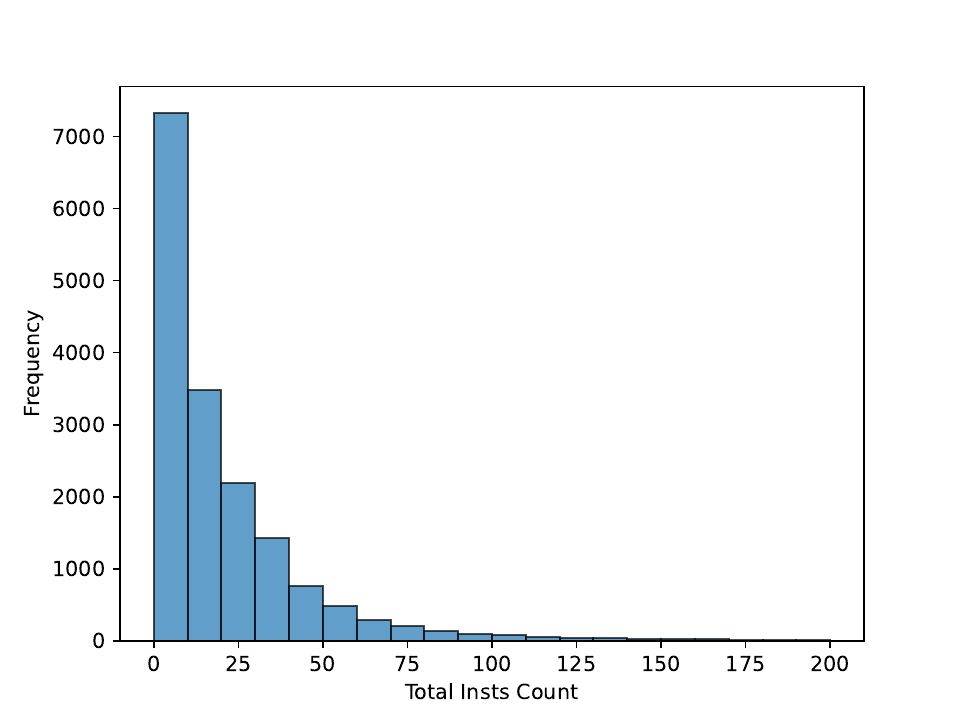}
        \label{fig:sub3}
    \end{subfigure}

    \begin{subfigure}{0.32\textwidth}
        \centering
        \includegraphics[width=\textwidth]{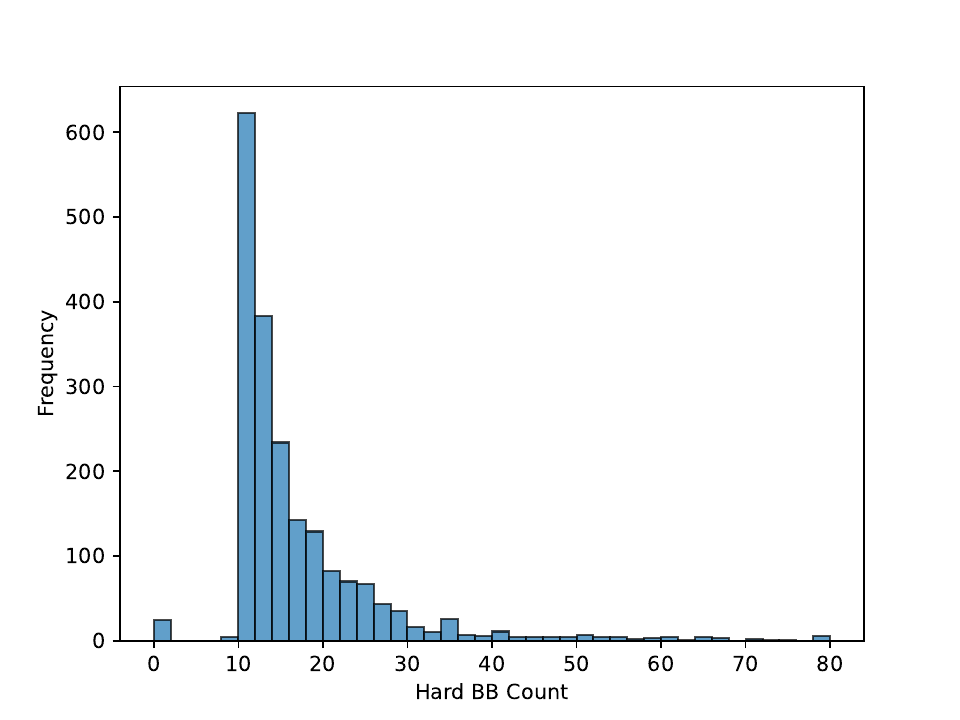}
        \label{fig:sub4}
    \end{subfigure}
    \hfill
    \begin{subfigure}{0.32\textwidth}
        \centering
        \includegraphics[width=\textwidth]{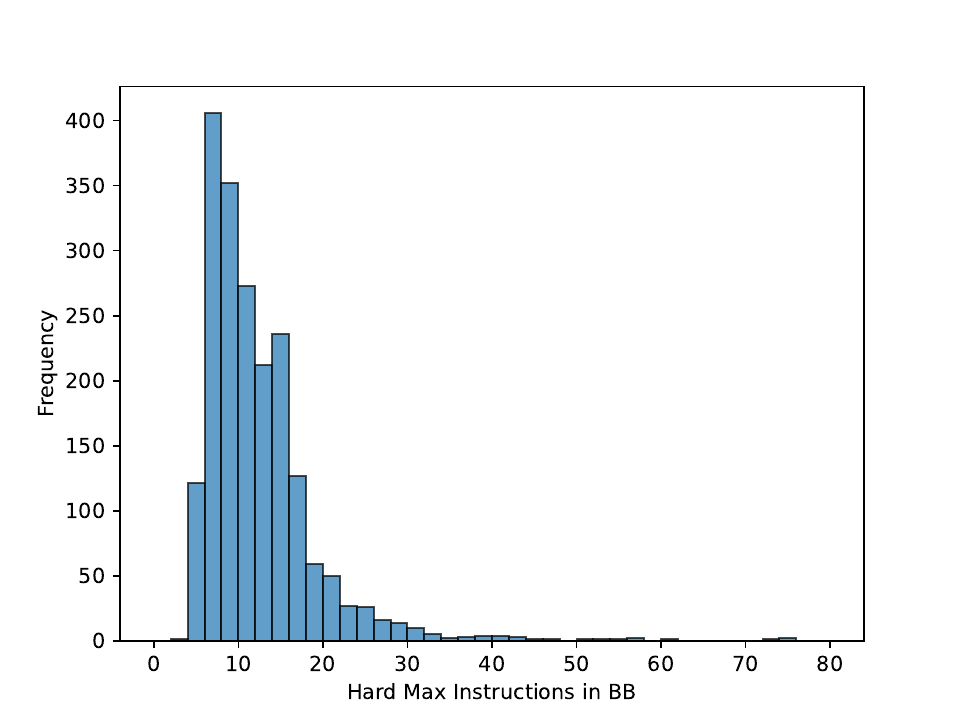}
        \label{fig:sub5}
    \end{subfigure}
    \hfill
    \begin{subfigure}{0.32\textwidth}
        \centering
        \includegraphics[width=\textwidth]{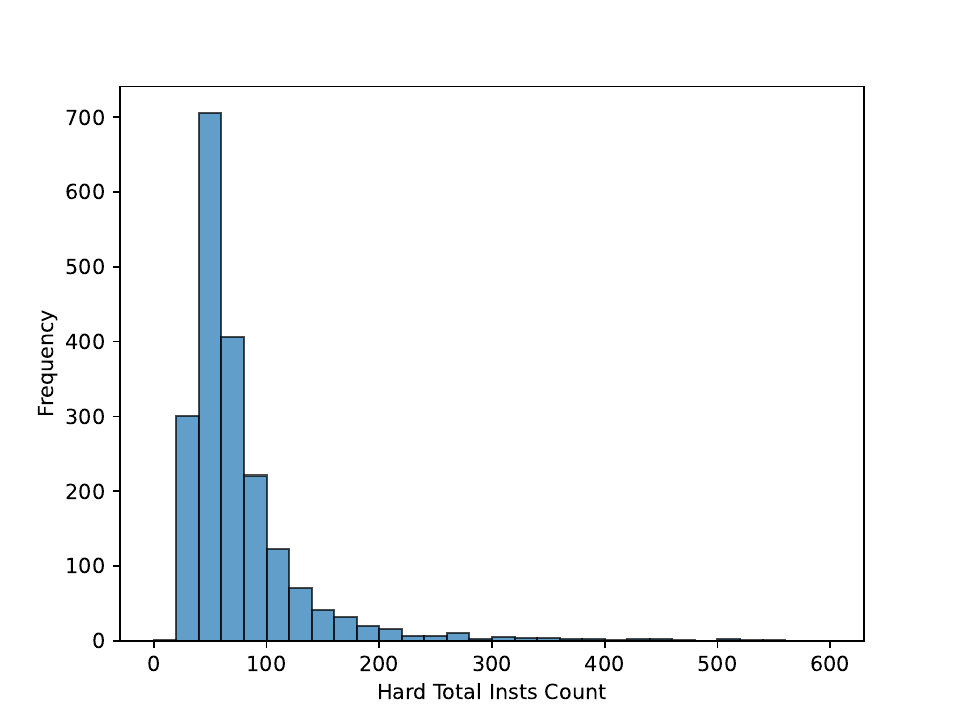}
        \label{fig:sub6}
    \end{subfigure}

    \caption{Complexity breakdown of ExeBench and its hard 10\% (roughly) subset, we use llvm as the analysis tool, then filter the subset with the following conditions: number of basic blocks(BB) $\geq$ 10 or max instructions in BB $\geq$ 80 or total instructions $\geq$ 200. Upper figures characterize the overall of Exebench and Lower figures characterize the hard 10\% subset.}
    \label{fig:exebench_breakdown_grid_view}
\end{figure}

The failure cases observed in ExeBench can be majorly categorized into three types: 
    \begin{itemize}
        \item The insufficiency on some language-specific features, for example, lacking the knowledge of certain operations, which can be definitely improved with more data in the next model pretrained or by providing external knowledge to aid its generation. 
        \item The unsuccessful reasoning step in the workflow. This method requires the LLMs to reason arithmetic computation and capture specific code patterns in the code to form intermediate results to aid the generation. If the reasoning process generates incorrectly (rare), the whole workflow will fail. However, the reasoning capabilities required for this method is not high, majorly the addition and multiplication of integer values within 1000(typically). The error-feedback loop can mitigate most of trivial errors during reasoning. Besides, as LLMs keep improving their abilities in reasoning and math, this type of failures will reduce significantly.
        \item Very long code reasoning and follow-up generation, where LLMs fail to generate a very large output at once. The first reason is the limitation of current LLMs themselves, although advanced LLMs have increased their context limits into hundreads of thousands tokens, their single generation capability is still limited, to either 4096, 8192 or 16384 tokens. The second reason is the difficulty to generate a long, error-prone output(like assembly languages) at once, this is an intrinsic drawback of direct generation method itself, and can be improved greatly with the proposed LEGO translation/compilation method. LEGO translation can reduce the complexity to control block level, or at maximum, statement level, however, if the statement itself is very long and complicated to evaluate (which is very rare during evaluation, but potential in modern programming paradigms), our method doesn't tackle it, which is a limitation in our work. However, more program-rewriting steps can help to mitigate such issues and is verifiable, and we leave it as future work.
    \end{itemize}

\begin{figure}
    \begin{center}
    \includegraphics[width=0.95\textwidth]{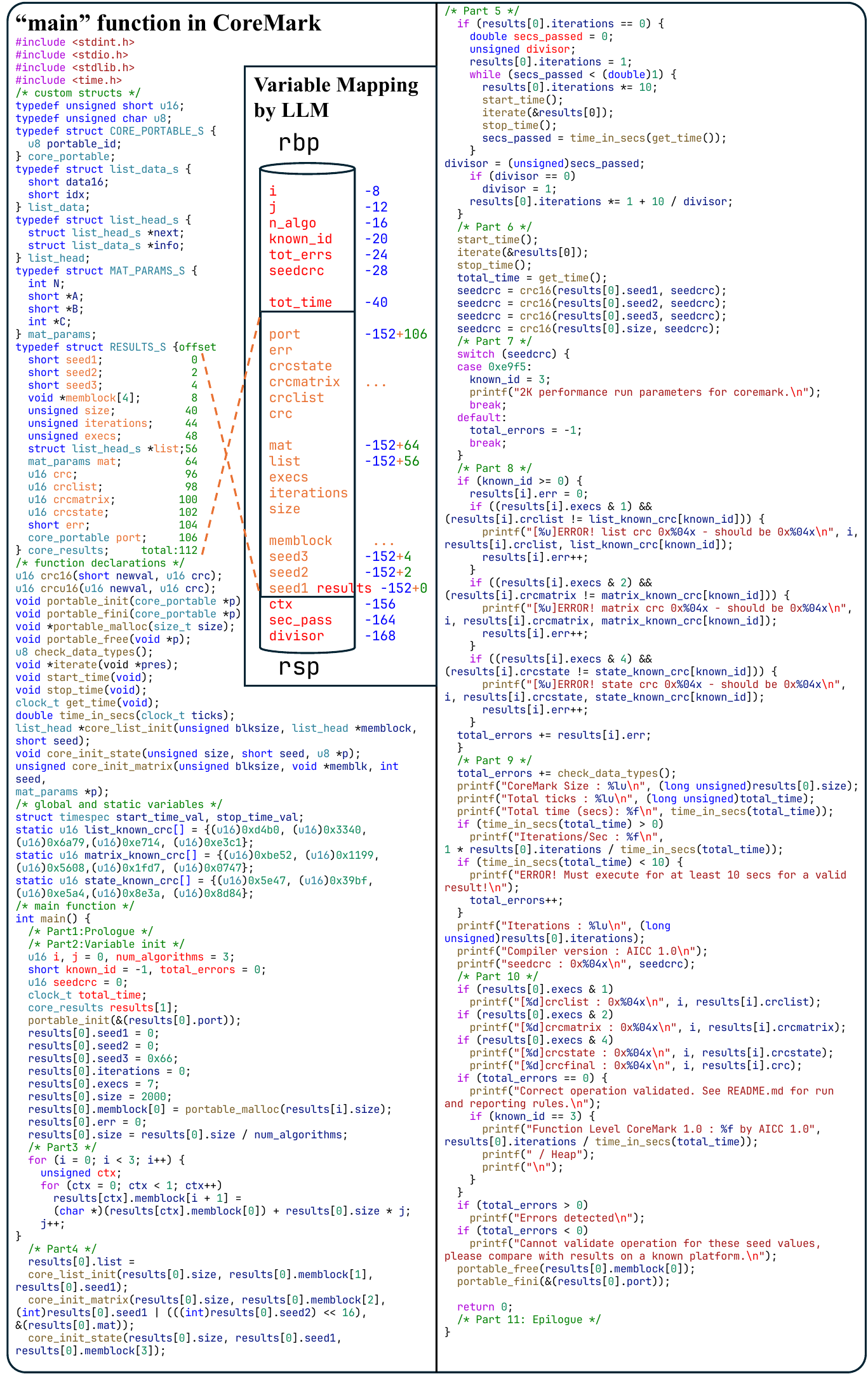}
    \end{center}
    \caption{The CoreMark main function, one of the most difficult code we evaluated. In this figure, all CoTs are illustrated in the code annotations in color, as well as the variable mapping process.}
    \label{coremark_main}
\end{figure}

\begin{figure}
\begin{center}
\includegraphics[width=0.98\textwidth]{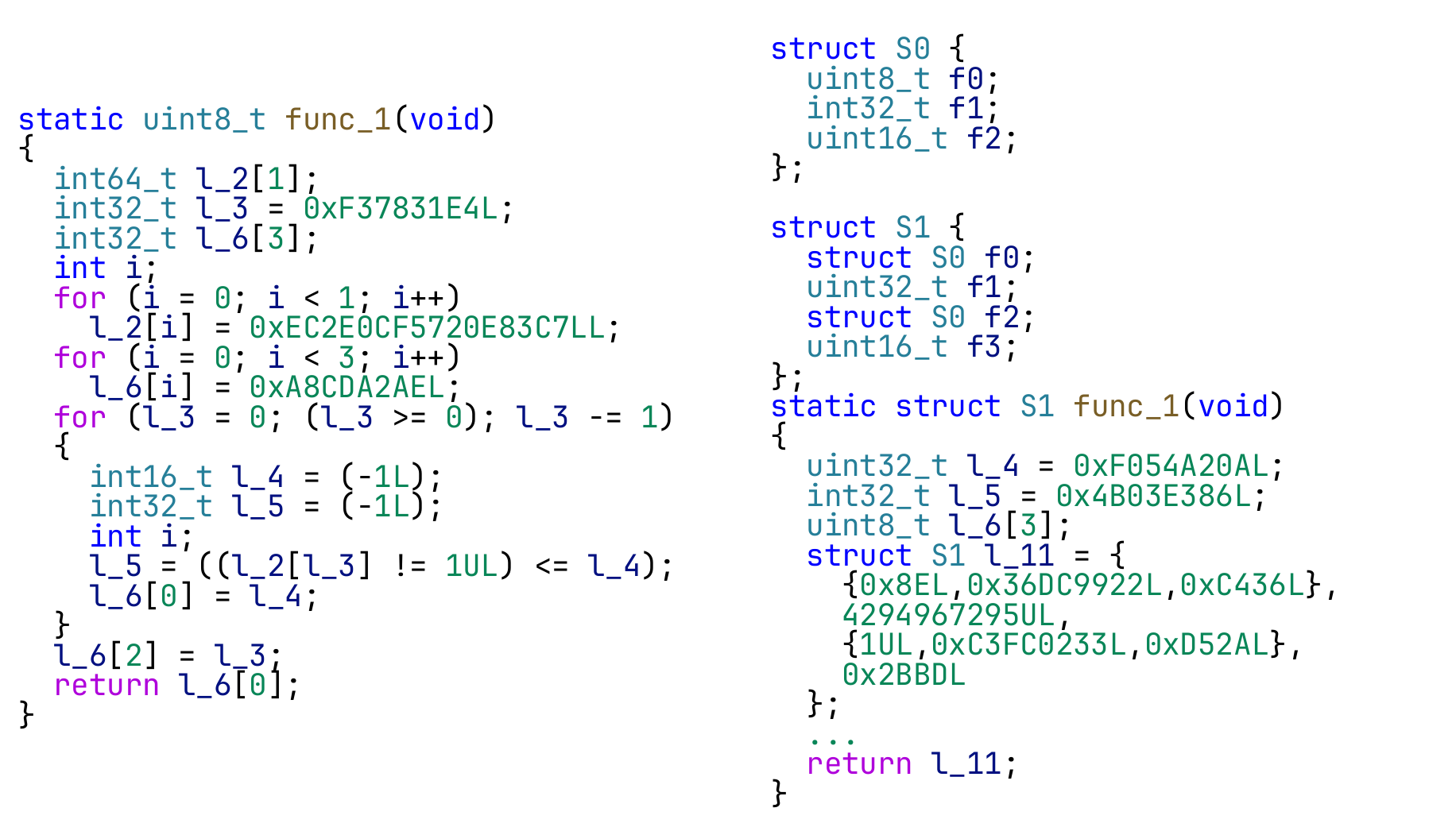}
\end{center}
\caption{Csmith example code, the major body part of the right hand side code is omitted. This example characterizes the necessity of both the Chain-of-Thought reasoning of structs and stack allocation and the LEGO translation method to overcome the complexity of coarse-grained translation.}
\label{fig_csmith_example1}
\end{figure}

\begin{figure}
\begin{center}
\includegraphics[width=0.98\textwidth]{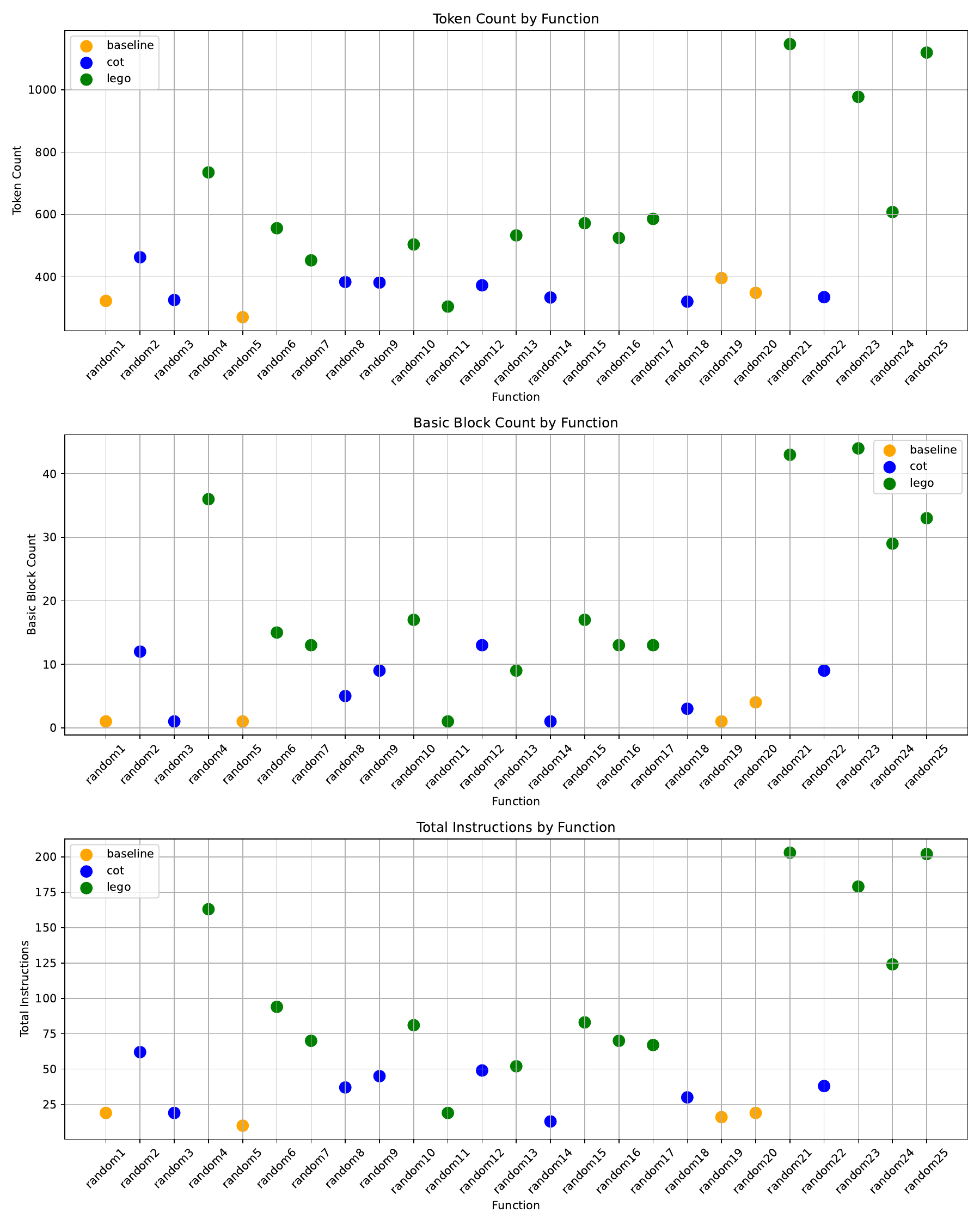}
\end{center}
\caption{Csmith random generated code statistics, where the practical utility of the LEGO method is show clearly by passing significantly more complex cases.}
\label{fig_csmith_results}
\end{figure}

\subsection{Other evaluation details}

\autoref{tab:results_temperature} shows the impact of temperature when using LLMs for neural compilation. LLMs have better Pass@1 accuracy when temperature is low,  but higher Pass@5 accuracy when temperature is high. This is as expected, since temperature influences the decoding process, with higher temperature, the results are more diverse, allowing LLMs to jump out of pretraining bias, however, this could also cause more errors by choosing sub-optimal decoding tokens that may cause errors. 

\begin{table}[t]

  \caption{Ablation study: impact of temperature on Pass@1 and Pass@5 performance}
  \label{tab:results_temperature}

  \centering
  \resizebox{\textwidth}{!}{%
    \begin{tabular}{lcccccccccc}
      \toprule
      \multirow{2}{*}{Model} & \multicolumn{5}{c}{Pass@1} & \multicolumn{5}{c}{Pass@5} \\
      \cmidrule(lr){2-6} \cmidrule(lr){7-11}
      & 0.2 & 0.4 & 0.6 & 0.8 & 1.0 & 0.2 & 0.4 & 0.6 & 0.8 & 1.0 \\
      \midrule
      GPT-4o & 71\% & \textbf{73\%} & 72\% & 72\% & 72\% & 79\% & 83\% & 86\% & 89\% & \textbf{92\%} \\
      Claude-3.5-Sonnet & 87\% & 91\% & \textbf{93\%} & 88\% & 89\% & 91\% & 92\% & \textbf{96\%} & 94\% & \textbf{96\%} \\
      DeepseekCoder & \textbf{89\%} & 88\% & 86\% & 87\% & 88\% & 92\% & 92\% & 92\% & \textbf{93\%} & 92\% \\
      \midrule
      GPT-4o-mini & \textbf{64\%} & 61\% & 61\% & 60\% & 60\% & 71\% & 71\% & 79\% & 73\% & \textbf{80\%} \\
      Claude-3-Haiku & \textbf{79\%} & 76\% & 78\% & 72\% & 73\% & 82\% & 84\% & 85\% & \textbf{86\%} & \textbf{86\%} \\
      \bottomrule
    \end{tabular}%
  }
\end{table}

\section{Limitations}
\label{limitation}

\textbf{Optimization Capabilities}: The major focus of LEGO-Compiler is on the translation correctness rather than code optimization. Traditional compilers excel at producing highly optimized code, a capability not yet matched by our neural approach. Although we do find LLMs are capable to optimize the translation process individually and generate optimized assembly code, it is not our main focus.
Future work could explore integrating optimization techniques into the neural compilation process.

\textbf{Performance Overhead}: As noted in the discussion, the computational cost of neural compilation is significantly higher than traditional methods. This limitation may restrict its practical application in scenarios where compilation speed is critical.

\textbf{Complex Expression Handling}: The paper acknowledges challenges in managing highly complex expressions, proposing external tool integration or expression decomposition as potential solutions. This indicates a current limitation in LLMs' ability to handle intricate code structures independently.

\textbf{Architecture-Specific Knowledge}: While the paper demonstrates success with x86, ARM, and RISC-V architectures, expanding to a broader range of architectures, especially more specialized ones, may require significant additional training or fine-tuning of the LLMs or providing large RAG database to provide such knowledge in the context.

\textbf{Security and Reliability}: The stochastic nature of LLM outputs raises concerns about the consistency and security of the generated assembly code. Ensuring deterministic outputs and preventing potential vulnerabilities introduced by the neural compilation process remains a challenge.

\textbf{Handling of Language-Specific Features}: The paper primarily focuses on C-like language compilation and proves the availability of functionality in neural compilation through both theoretical and empirical results. However, extending the approach to other programming languages can result in more tailored problems, for example:
\begin{itemize}
    \item \textbf{RAII idiom}: Languages with class properties, like C++, have an important programming idiom called \textbf{R}esource \textbf{A}cquisition \textbf{I}s \textbf{I}nitialization(\textbf{RAII}), which pose significant challenges for LLMs. For instance, constructor and destructor functions in these languages are implicitly called based on scope. This implicit behavior is difficult for LLMs to accurately model and implement in assembly code.
    \item \textbf{Name Mangling}:Languages like C++ and Rust use name mangling mechanisms for function overloading and template instantiation. This requires special handling of global symbols, such as function names, during compilation, which may be challenging for LLMs to consistently implement without explicit training on these concepts, but this could be solved by using external mangling tools like\textbf{ c++filt}~\cite{cppfilt}.
    \item \textbf{Dynamic Language Features}: Some language features violate the composability principle that LEGO translation relies on. For example, Python's exception handling mechanism, which can cross scope boundaries, would make the LEGO translation method ineffective for such features.
\end{itemize}

It's important to note that many of these challenges are not unique to neural compilation. Traditional compilers also struggle with highly dynamic features like exception handling and Run-Time Type Information (RTTI). Languages like Python achieve flexibility by sacrificing native code generation in favour of interpretation or JIT compilation. Therefore, these limitations are not specific to our work but rather inherent to any approach based on static compilation analysis.

The ability to handle these diverse language features represents an area for future research in neural compilation. It may require developing specialized techniques or combining neural methods with traditional compiler approaches to address these complex language-specific challenges.

Scalability to Very Large Codebases: While the LEGO translation method significantly improves scalability, handling entire large-scale software projects or operating systems may still be beyond the current capabilities of this approach. However, It is noteworthy that repository complexity is naturally reduced into files or functions, therefore, LLM-based compilers and translators are potential to translate them with more advanced models and more carefully designed methods.

\end{document}